\newcommand{\dqt}[1]        {``{#1}''}
\newcommand{\separate}    {\vspace{0cm}\begin{center}*~~~~~~~~~~*~~~~~~~~~~*\end{center}\vspace{0cm}}
\newcommand{\tset}[1]      {\{{#1}\}}                     
\newcommand{\cb}          {\begin{tabbing}MMMMM\=MM\=MM\=MM\=MM\=MM\=MM\=MM\=MM\=MM\= \kill}
\newcommand{\ce}          {\end{tabbing}}
\newcommand{\dom}[1]      {\mbox{Dom}({#1})}
\newcommand{\rg}[1]      {\mbox{Rg}({#1})}
\newcommand{\tpset}      {{\mathcal{T}}}
\newcommand{\tmach}      {\phi}
\newcommand{\tape}       {\tau}
\newcommand{\comp}       {{\mathfrak{M}}}
\newcommand{\rcomp}[2]   {{\mathfrak{M}}({#1},{#2})}
\newcommand{\ocomp}[1]   {{\mathfrak{M}}({#1})}
\newcommand{\acomp}[2]   {{\mathfrak{M}}^{{#1}}({#2})}
\newcommand{\carrow}     {\twoheadrightarrow}          
\newcommand{\tmstop}     {\!\downarrow} 
\newcommand{\tmnostop}   {\!\uparrow} 
\newcommand{\repa}       {{\mathfrak{a}}}
\newcommand{\repb}       {{\mathfrak{b}}}
\newcommand{\repc}       {{\mathfrak{c}}}
\newcommand{\repx}       {{\mathfrak{x}}}
\newcommand{\repy}       {{\mathfrak{y}}}
\newcommand{\repu}       {{\mathfrak{u}}}
\def\parf{\ar@{=}[d]}
\newcommand{\wleq}       {\preceq}
\newcommand{\weqv}       {\stackrel{c}{\sim}}
\newcommand{\wpar}       {||}
\newcommand{\sleq}       {\subseteq}
\newcommand{\seqv}       {\stackrel{t}{\sim}}
\newcommand{\spar}       {||}
\newcommand{\bij}[1]     {\langle{#1}\rangle}
\newcommand{\tdg}[2]     {\mbox{deg}_{#1}({#2})}
\newcommand{\tdgind}[1]   {\pmb{{#1}}}
\newcommand{\rdg}[1]     {\mbox{rdg}({#1})}
\newcommand{\aleq}[1]       {\underset{{#1}}{\subseteq}}
\newcommand{\aeqv}[1]       {\overset{c}{\underset{{#1}}{\sim}}}
\newcommand{\apar}[1]       {\underset{{#1}}{||}}
\newcommand{\aseqv}[1]		{\overset{t}{\underset{{#1}}{\sim}}}
\newcommand{\frep}[2]       {{#1}_{#2}}
\newcounter{qqremark}
\newcounter{qexample}
\newcounter{qproperty}
\def\example{
\bigskip

\refstepcounter{qexample}%
\noindent \textbf{Example \Roman{qexample}:}\\
}
\def\remark{
\refstepcounter{qqremark}
\noindent \emph{Remark \arabic{qqremark}:}
}
\def\property{
\refstepcounter{qproperty}
\noindent {Property \arabic{qproperty}:}
}
\begin{document}

\pagestyle{headings}
\mainmatter

\title{On the relation between representations and computability} 

\titlerunning{Representation and Computability}

\authorrunning{Casanova, Santini}

\author{Juan Casanova$^1$ \and Simone Santini$^3$\thanks{The work was
    carried out when Juan Casanova was with the Escuela Polit\'ecnica
    Superior, Universidad Aut\'onoma de Madrid; Simone Santini was
    supported in part by the the Spanish \emph{Ministerio de
      Educaci\'on y Ciencia} under the grant N. TIN2016-80630-P,
    \emph{Recomendaci\'on en medios sociales: contexto, diversidad y
      sesgo algor{\'\i}tmico.}  
  } 
}
\institute{Centre for Intelligent Systems and their Applications,
  School of Informatics, University of Edinburgh, UK
 \and 
Departamento de Ingenier{\'\i}a Inform\'atica,  Escuela Polit\'ecnica Superior, Universidad Aut\'onoma de Madrid,
 Spain}

\maketitle

\begin{abstract}
  One of the fundamental results in computability is the existence of
  well-defined functions that cannot be computed. In this paper we
  study the effects of data \emph{representation} on computability; we
  show that, while for each possible way of representing data there
  exist incomputable functions, the computability of a specific
  abstract function is never an absolute property, but depends on the
  representation used for the function domain. We examine the scope of
  this dependency and provide mathematical criteria to favour some
  representations over others.  As we shall show, there are strong
  reasons to suggest that computational enumerability should be an
  additional axiom for computation models.

  We analyze the link between the techniques and effects of
  representation changes and those of oracle machines, showing an
  important connection between their hierarchies. Finally, these
  notions enable us to gain a new insight on the Church-Turing thesis:
  its interpretation as the underlying algebraic structure to which
  computation is invariant.
 \end{abstract}

\section{Introduction}
Both historically and conceptually, computability and decidability
arise as a result of trying to formally deal with the way work in
mathematics itself is performed. In an attempt to provide a
formalization for the work of a mathematician proving theorems and
deriving results, several authors proposed respective models, such as
the Turing Machine \cite{turing:36} or Church's $\lambda$-calculus
\cite{church:36}. The arguably most important result deriving from
this work is the proof that there are mathematically well defined
functions which cannot be computed by any of these systems. The
halting theorem \cite{turing:36} is one of the most well known forms
of this statement, particularly proving the incomputability of the
so-called halting problem. In close relation to this statement is
G\"odel's incompleteness theorem \cite{godel:31} proving that no
formal system capable of expressing basic arithmetic can be both
consistent and complete, a more general proof of the same underlying
fact that the halting theorem proves.

The classic concept of \dqt{degree of recursive unsolvability}
introduced by Post \cite{post:44} appears as a result of extending the
concept of reducibility, implicit in Turing's work \cite{turing:39},
to incomputable functions. Formally, an oracle machine is a
theoretical device that enables a Turing machine to compute an
otherwise incomputable function (respectively, set) at any step of its
computation, thus making a whole new set of previously incomputable
functions (r., sets) computable under the new model, while leaving
other functions (r., sets) incomputable. This introduces a preorder
relation among functions (r., sets) that enables the definition of
equivalence classes (the degrees of recursive unsolvability) and a
partial ordering among them \cite{kleene:52}. The structure of this
set has been the subject of intensive study, much of it related to the
properties of the \emph{Turing jump}. The Turing jump is the extension
of the basic halting problem to Turing machines with oracle, providing
an effective method for obtaining an incomputable function (r., set)
for any model of computation defined through a Turing machine with
oracle. The classic 1954 paper by Kleene and Post \cite{kleene:54}
contains the fundamental results in this area: the structure of the
Turing degrees is that of a join-semilattice, for any set $A$, there
are $\aleph_0$ degrees between $\pmb{a}$ and $\pmb{a}'$. The global
structure of the set of degrees of recursive unsolvability,
$\mathcal{D}$, has been the object of a significant amount of work,
especially in the period 1960--1990
\cite{posner:81,sacks:61,sacks:64,slaman:01}. Notable results include
the fact that every finite lattice is isomorphic to an initial segment
of ${\mathcal{D}}$ \cite{lerman:71} and that every finite Boolean
algebra \cite{rosenstein:68} and every countable linear order with
least element \cite{hugill:69} can be embedded in an initial segment
of ${\mathcal{D}}$. The strongest result in this area was obtained in
\cite{abraham:86}: every initial element of an upper semi-lattice of
size $\aleph_1$ with the countable predecessor property occurs as an
initial segment of ${\mathcal{D}}$.

A different set of questions arise if we pose some restrictions on
the general Turing reducibility. By limiting the access to the oracle
one obtains several types of \emph{strong reducibilities}
\cite{odifreddi:99}, among which the most significant are the
truth-table \cite{feferman:57} and enumeration \cite{friederberg:59}
reducibilities. 

However, while representation is an arguably omnipresent concept in
all kinds of computability and computer science work, little specific
research has been carried about its effects in computability of
functions (respectively, sets) from a purely abstract and
formalism-centred point of view. In this article, we examine this and
related ideas, motivated by a small set of examples, introduced at the
beginning of the text, that show that, without further formal
constraints, the choice of the way data is represented in any
computation formalism is in principle a trascendental decision that
greatly affects the notion of computability.

Following the examples, we provide a short and straight to the point
representation-free definition of the Turing machine formalism which
allows us to restate some of the most common results in computability
theory in absolutely formal terms, with no dependency on the
representation chosen. The extremely formal character of these results
makes them insufficient for answering the common questions in
computability, and thus, in the following section we introduce a
formal definition of representation along with some basic derived
concepts and results.

We then define two related but not identical partial ordering structures
between representations (transformability and computational strength),
which enable us to prove the two main results of the article. These
results trace the boundaries on the effect that representation may
have on computability and provide insight on the essential elements of
computability that are in closest relation with the admittedly odd
effects of representation changes shown in the introductory examples
(namely, computational enumerability of sets and domain restriction of
functions). From these results we conclude that some of the most
counter-intuitive and consistency-defying issues raised by
representation changes can be dealt with by restricting the definition
of computability, however leaving some of the important effects of
representation on computability inevitably remaining.

Finally, motivated by the resemblance of the effects of
representations to those of oracle machines and of the relations
between representations to the hierarchy of degrees of recursive
unsolvability, we study the relation between these two concepts in two
different ways. First, we show that the hierarchy of degrees of
recursive unsolvability does not correspond to the hierarchy of the
so-called representation degrees, but have a strong connection with
them. Second, we show that all the definitions and results regarding
representations introduced in this article can be naturally and easily
extended to computability with oracle machines.

As a last consideration, we examine the relation that the
Church-Turing thesis and related concepts have with the ideas
introduced throughout the article.

\section{Some definitions}
Let ${\mathcal{F}}$ be the set of all functions defined on countable
sets.  Given a function $f$, we indicate as usual with $\dom{f}$ the
set of values for which $f$ is defined and with $f(\dom{f})=\rg{f}$
the set of values that are the image of elements of $\dom{f}$.  The
restriction $f_{|A}$ of $f$ to $A\subseteq\dom{f}$ is the function
defined on $A$ such that for all $x\in{A}$, $f_{|A}(x)=f(x)$.

The set of functions $(A\rightarrow{B})$ is defined in the usual way
\begin{equation}
  (A\rightarrow B) = \tset{f \in {\mathcal{F}}\,\big|\,A\subseteq\dom{f} \wedge \rg{f}\subseteq{B}}
\end{equation}
We shall use a square bracket to indicate that a set coincides with
the domain or range of a function, so we shall define
\begin{equation}
  [A\rightarrow B] = \tset{f \in {\mathcal{F}}\,\big|\,A=\dom{f} \wedge \rg{f}={B}}
\end{equation}
with the obvious meaning for $[A\rightarrow{B})$ and
  $(A\rightarrow{B}]$. Note that, of course,
$[A\rightarrow{B}]\subseteq(A\rightarrow{B})$.

\separate

In order to ground the intuition that drives this article, namely that
representations bear some relevance for computability, we shall
present here two examples, which we shall use throughout the paper.

\example%
\label{excomparator}
Consider the function $eq \in [\mathbb{N} \times \mathbb{N}
  \rightarrow \{0,1\}]$ that compares its two argumens, defined as
$eq(m,n)=\mbox{if~}m=n\mbox{~then~}1\mbox{~else~}0$.

Consider now two representations of pairs of natural numbers: in the
first representation, one of the numbers is coded to the left of the
initial head position (the least significant digit closest to the
initial head position), and the other number is coded to the right of
the initial head position, symmetrically, both in binary code, with no
additional information.  In the second representation, they are also
represented one in each direction from the initial head position, but
in unary code ($n$ is coded as $n+1$ \dqt{1}s followed by zeroes).

It is easy to see that in the first representation, $eq$ is not
computable.  For assume there exists a Turing machine $\phi$ which
succesfully compares two numbers in binary representation. Let $n$ be
any natural number. Run $\phi$ with the binary representation of
$(n,n)$. Since it computes $eq$, it must halt with a positive result
after a finite number of computation steps. Write $s_n \in \mathbb{N}$ for the
number of steps $\phi$ executes before halting with input $(n,n)$. Now
consider any natural number $m$ such that $m \neq n$ but $m = n$ mod
$2^{s_n}$. That is, $m$ and $n$ have the same $s_n$ least significant
binary digits. An infinite amount of such numbers exist. If we run
$\phi$ with $(m,n)$ as input, it will execute the first $s_n$ steps,
in which the head cannot go any further from the initial position than
$s_n$ cell positions. Thus, all cells inspected by $\phi$ during this
execution are equal to the ones inspected when running with input
$(n,n)$. Therefore, $\phi$ must necessarily halt after $s_n$ steps,
with a positive result. However, $m \neq n$. This proves that $\phi$
does not succesfully compare any two numbers.

Note that we are actually proving something stronger: no finite
computability model (a model that can't do an infinite number of
operations at the same time) can compare two numbers in finite time
using the first representation.

However, in the latter representation, the function is trivially
computable. Intuitively, the machine runs alternatively in both
directions until it encounters the last "1" on one side. If it is also
the last one on the other side, they are equal, otherwise, they are
distinct. Note that the problem with the former representation
disappears if the representation is slightly changed so that, as it
happens in all practical computers, the set of possible tape
configurations is finite or the boundaries of the tape are marked with
a special symbol.

\example%
\label{exhalting}
Consider the halting problem: find a Turing machine $\phi_H$ such
that, given any other Turing machine $\phi$ and an input tape $\tau$,
$\phi_H$ always halts and indicates whether $\phi$ with input tape
$\tau$ halts or not. One of the fundamental and best known textbook
theorems in computing science says that no such TM exist.

Consider the typical representation for which the halting
theorem is proved: The TM is represented on one side of the tape,
codified as a set of quintuples depicting its transition function, and
the input tape is on the other side, folded so that it can be
represented on just one side of the global tape.  In this
representation, the halting problem is undecidable.

However, the halting problem, in its most general form, is an abstract
problem about the properties of certain sets of Turing machines and input tapes.  In
order to consider its computability, we had to indicate that a Turing
machine is represented by the quintuples of its transition
function. But a TM is an abstract mathematical object that can be
codified on a tape in different ways. For example, we can extend the
previous representation by placing an additional symbol at the initial
head position. This symbol will be $1$ if the input Turing machine
halts with the given input tape, and $0$ otherwise.

In this representation, the TM $\phi_H$ exists trivially. Therefore
the halting problem is decidable.

\separate

\remark%
\label{compuremark}
One possible objection to the previous example is that we have
obtained a solution to the halting problem only because we have a
\dqt{non-computable} representation, that is, a representation in
which a non-computable quantity is computed and made explicit.

We must stress, however, that computability is defined only for
quantities represented on tape, that is, only once the abstract
entities involved have undergone the process of representation and its
computability is thus a strictly formulable fact in the Turing machine
formalism (or any other computation formalism). To say that a
representation is non-computable (or, for that matter, that it is is
computable) is meaningless, as representation is the prerequisite
necessary so that one can define computability.

\bigskip

The purpose of this article is to study the mathematical aspects of
representation and their relation to computability, in the light of
the two examples above and related facts. We will offer a theoretical
justification as to why the first representation offered for the
halting problem is more \dqt{reasonable} than the second. This
justification will allow us to acquire an insight of the mathematical
properties of computation formalisms and to better understand how
representations affect the notion of computability.

\section{Definitions and formalism}
Throughout most of this paper we shall consider representations in the
context of computation with Turing Machines, as it is arguably the
best known computing model, and one in which the explicit separation
between the computing device (the machine) and the input data (the
tape) makes the representation problem clearer and easier to work on.
Towards the end of the paper we will offer a discussion about
how the choice of formalism affects the discussions and conclusions
offered here and discuss briefly how formalism is related to a choice
of representation.

Many interesting results in computability can be formulated
meaningfully as statements on sets of tapes or strings, without the
need for representations. We will discuss these results later in this
section. Most of the results of this section are very well known facts
whose proofs can be found in textbooks. We shall therefore simply
remind the results, skipping the proofs.

We shall indicate with $\tpset_\Sigma$ the set of tapes on a finite
alphabet $\Sigma$ with a finite number of non-blank symbols; whenever
the alphabet in question is clear from the context, we shall omit the
subscript $\Sigma$; tapes will be indicated as $\tape$, $\tape'$ (or
$\tape_1$), etc.
As usual, given a tape $\tape$, we shall indicate with
$\tmach(\tape)\tmstop$ the fact that the machine $\tmach$ stops on input
$\tape$ and with $\tmach(\tape)\tmnostop$ the fact that $\tmach$ doesn't
stop.

Functions from tapes to tapes are especially interesting here as they
are the only ones for which Turing computability can be defined, as
observed in Remark \ref{compuremark}. A Turing machine, \emph{qua}
machine can take any tape as input and do something with it (possibly
never stopping, of course) but in general a function
$f\in(\tpset\rightarrow\tpset)$ will be defined only for a subset of
tapes. Therefore, a Turing machine \emph{qua} implementation of a
function $f$ is defined only in the domain of that function. 

Representations often come with a restriction in the set of
\dqt{valid} tapes. In example \ref{excomparator}, any tape can be
interpreted as the binary representation of two numbers, but only
tapes consisting in a collection of consecutive \dqt{1}, with \dqt{0}
everywhere else can be interpreted as the unary representation of
numbers. A TM $\phi_{\mbox{eq}}$ that implements a comparator on this
representation would of course work on any tape, in the sense that
whatever may be the input tape the TM $\phi_{\mbox{eq}}$ would operate,
but on these tapes $\phi_{\mbox{eq}}$ would not be an implementation
of the comparator function.

\begin{definition}
 Let $Q,P \subseteq\tpset$; a function $f \in[Q\rightarrow{P})$ is
  \emph{computable} if there is a TM $\tmach$ such that
    \begin{equation}
      \begin{aligned}
        \tape \in Q &\Rightarrow \tmach(\tape) = f(\tape)
      \end{aligned}
    \end{equation}
\end{definition}

\begin{definition}
  Let $Q,P\subseteq\tpset$; a function $f\in[Q\rightarrow{P})$ is
    \emph{partially computable} (p.c.) if it is computable and
    \begin{equation}
      \begin{aligned}
        \tape \not\in Q &\Rightarrow \tmach(\tape)\tmnostop
      \end{aligned}
    \end{equation}
\end{definition}

\begin{definition}
  A function $f\in[Q\rightarrow{P})$ is \emph{total computable} (t.c.) if it is
    p.c. and $Q=\tpset$.
\end{definition}

We shall indicate with $\comp$ the set of computable functions, and
with a doubly pointed arrow the fact that a specific function is
computable, that is, $f\in[Q\carrow{P}]$ entails that $f$ is
computable, that is:
\begin{equation}
  [Q\carrow P] = \comp \cap [Q \rightarrow P]
\end{equation}
whenever such an arrow appears in a diagram, the diagram will be said
to commute if, for each doubly pointed arrow there is a computable
function that makes the diagram commute in the traditional sense.  The
following theorem is the functional formulation of the standard result
on the composability of Turing machines.

\begin{theorem}
  \label{compotheorem}
  Let $f\in[Q\carrow{P})$ and $g\in(P\carrow{R}]$,
  $P'=\rg{f}\cap\dom{g}$, $Q'=f^{-1}(P')$ and $R'=g(P')$, then the
  restriction of $g\circ{f}$ to $Q'$ is computable:
  \begin{equation}
    (g\circ{f}) \in [Q'\carrow R']
  \end{equation}
\end{theorem}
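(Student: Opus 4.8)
The statement is about composition of computable functions. We have $f\in[Q\carrow P)$ meaning $f$ is computable with domain exactly $Q$ and range inside $P$; similarly $g\in(P\carrow R]$ with $P\subseteq\dom g$ and range exactly $R$. We set $P'=\rg f\cap\dom g$ (the values of $f$ on which $g$ is actually going to act), $Q'=f^{-1}(P')$ (the inputs whose $f$-image lands in $P'$), and $R'=g(P')$ (the resulting outputs). We must show the restriction of $g\circ f$ to $Q'$ is computable with domain exactly $Q'$ and range exactly $R'$.

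The plan is as follows. First I would unwind the domain/range bookkeeping purely set-theoretically: check that $Q'\subseteq Q$ (so that $f$ is defined there), that $f(Q')=P'$ and hence $g$ is defined on all of $f(Q')$, that $(g\circ f)(Q')=g(f(Q'))=g(P')=R'$, and that $\dom((g\circ f)_{|Q'})=Q'$ with range exactly $R'$. This establishes $(g\circ f)_{|Q'}\in[Q'\rightarrow R']$ as an abstract function; all that then remains is the computability claim, i.e. that it lies in $\comp$.

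For computability I would invoke the standard construction behind the composability of Turing machines, which the theorem is explicitly the functional restatement of: let $\tmach_f$ be a TM witnessing $f\in[Q\carrow P)$ and $\tmach_g$ one witnessing $g\in(P\carrow R]$, and form the machine $\tmach$ that runs $\tmach_f$ and then feeds its output tape to $\tmach_g$. I then need to verify the defining implication from the Definition of computable: for every $\tape\in Q'$, $\tmach(\tape)=g(f(\tape))=(g\circ f)(\tape)$. Since $\tape\in Q'\subseteq Q$, by computability of $f$ we get $\tmach_f(\tape)=f(\tape)$, which in particular halts; since $\tape\in Q'=f^{-1}(P')$, the intermediate tape $f(\tape)\in P'\subseteq\dom g\cap P$, so $\tmach_g$ applied to it yields $g(f(\tape))$ by computability of $g$. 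Composing, $\tmach(\tape)=g(f(\tape))$, which is exactly what the Definition requires. Note we do not claim $\tmach$ halts only on $Q'$ — computability, unlike partial computability, makes no demand about behaviour off the domain — so there is nothing to check outside $Q'$.

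The only mild subtlety, and the step I would be most careful with, is the interface between the two machines: one must make sure the output convention of $\tmach_f$ matches the input convention expected by $\tmach_g$ (head position, blank handling) so that chaining them is literally legal, and that the restriction to $Q'$ rather than $Q$ is exactly what makes $f(\tape)$ land in $\dom g$ so that the second machine's guarantee applies. This is precisely why the hypotheses force us to pass to $Q'$, $P'$, $R'$ instead of working with $Q$, $P$, $R$: outside $P'$ the machine $\tmach_g$ carries no guarantee even though, qua machine, it still does something. Everything else is the routine diagram-chase on domains and ranges; I would present that compactly and spend the written proof mostly on stating the chaining construction and the one displayed chain of equalities $\tmach(\tape)=\tmach_g(\tmach_f(\tape))=\tmach_g(f(\tape))=g(f(\tape))=(g\circ f)(\tape)$ valid for $\tape\in Q'$.
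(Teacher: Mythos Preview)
Your proposal is correct and is precisely the standard composability-of-Turing-machines argument that the paper has in mind; indeed the paper does not spell out a proof at all, merely remarking that the theorem is ``the functional formulation of the standard result on the composability of Turing machines'' and that such textbook facts are stated without proof. Your domain/range bookkeeping and the chained-machine construction are exactly what that remark points to.
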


Let $\nu,o\in\tpset$ be any two tapes with $\nu\ne{o}$; we shall call these
the \emph{yes} and \emph{no} tapes.

\begin{definition}
  The \emph{characteristic function} $\chi_A\in[\tpset\rightarrow\tpset)$ of a set $A\subseteq\tpset$ is
  the function
  \begin{equation}
    \chi_A(\tape) = 
    \begin{cases}
      \nu & \mbox{if $\tape\in{A}$} \\
      o   & \mbox{if $\tape\not\in{A}$}
    \end{cases}
  \end{equation}
\end{definition}

\begin{definition}
  The set $A\subseteq\tpset$ is computable if $\chi_A$ is computable.
\end{definition}

\begin{definition}
  \label{recena}
  A set of tapes $A\subseteq\tpset$ is computationally-enumerable-A
  (c.e.-A) if the restriction of $\chi_A$ to $A$,
  $\chi_{A|A}\in[A\rightarrow\tset{\nu}]$ is p.c., that is, if there
  is a TM $\tmach$ such that
  \begin{equation}
    \begin{aligned}
      \tape \in A     &\Rightarrow \tmach(\tape) = \nu \\
      \tape \not\in A &\Rightarrow \tmach(\tape)\tmnostop
    \end{aligned}
  \end{equation}
\end{definition}

\begin{definition}
  \label{recenb}
  A set $A\subseteq\tpset$ is computationally-enumerable-B (c.e.-B) if
  there is a partially computable function $f\in[A\rightarrow{A}]$ and
  a tape $\tape_0\in{A}$ such that for each $\tape\in{A}$ there is
  $i\in{\mathbb{N}}$ such that $\tape=f^{i}(\tape_0)$; the function
  $f$ is called the \emph{enumerator} of the set.
\end{definition}

\begin{theorem}
  \label{rectheorem}
  A set $A\subseteq\tpset$ is c.e.-A iff it is
  c.e.-B
\end{theorem}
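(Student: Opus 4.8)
The statement is the classical equivalence between "semi-decidable" (domain of a partial computable function / halting set of a machine) and "enumerable" (range of a total-or-partial computable function from a single seed), phrased here in the tape-to-tape vocabulary of the paper. The plan is to prove the two implications separately, using the composition result (Theorem \ref{compotheorem}) and the existence of a universal machine plus standard dovetailing, which I may invoke since they are the "very well known facts" the paper says it is assuming.

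**c.e.-B $\Rightarrow$ c.e.-A.** Suppose $A$ has an enumerator $f \in [A \rightarrow A]$ partially computable, with seed $\tape_0 \in A$, so every $\tape \in A$ equals $f^i(\tape_0)$ for some $i$. I would build a TM $\tmach$ that, on input $\tape$, successively computes $\tape_0, f(\tape_0), f^2(\tape_0), \dots$ (each step is a call to the machine for $f$, which halts on all these inputs since they lie in $\dom f = A$), and after producing each $f^i(\tape_0)$ compares it with $\tape$; comparison of two tapes for literal equality is trivially computable. If a match is ever found, $\tmach$ halts outputting $\nu$; otherwise it runs forever. Then $\tape \in A$ implies the match occurs at stage $i$ where $\tape = f^i(\tape_0)$, so $\tmach(\tape) = \nu$; and $\tape \notin A$ implies no $f^i(\tape_0)$ ever equals $\tape$ (they are all in $A$), so $\tmach(\tape)\tmnostop$. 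This is exactly the condition in Definition \ref{recena}. One edge case to mention: if $A = \emptyset$ there is no seed, so Definition \ref{recenb} forces $A \neq \emptyset$; the empty set is vacuously c.e.-A via any never-halting machine, so the implication still holds, but I would note this asymmetry.

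**c.e.-A $\Rightarrow$ c.e.-B.** Suppose $A \neq \emptyset$ (again the empty case is trivial/degenerate and handled by the convention above) and let $\tmach$ witness c.e.-A, so $A = \{\tape : \tmach(\tape)\tmnostop$ halts with output $\nu\}$ — more precisely $A$ is the halting set of $\tmach$. Fix any $\tape_0 \in A$. The idea is the standard dovetailing enumeration: there is a computable pairing of $\tpset \times \mathbb{N}$ with $\mathbb{N}$ (tapes are finite objects over a finite alphabet, hence effectively enumerable), and one defines an enumerator $f$ that walks through pairs $(\tape, k)$ in this order, simulating $\tmach$ on $\tape$ for $k$ steps; $f$ "outputs" the next tape $\tape$ for which the simulation halts within the allotted budget, and outputs $\tape_0$ (or repeats the last found element) until the first hit. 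Formally $f$ maps the $n$-th element of $A$ found in this dovetailed search to the $(n{+}1)$-st. The subtle point, and the one I expect to be the main obstacle, is packaging this as a genuine \emph{function} $f \in [A \rightarrow A]$ in the paper's strict sense: $f$ must be defined exactly on $A$ and nowhere else, and it must be \emph{partially computable}, i.e. by Definition of p.c. it must diverge on every tape not in $A$. A raw dovetailing procedure naturally takes a \emph{counter} $n$ as input, not a tape, so I would instead let $f$ take as input the current tape $\tape \in A$, recover its index $n$ in the dovetailed enumeration by re-running the search until $\tape$ appears (this halts precisely when $\tape \in A$, giving the required divergence off $A$), then continue the search to find the next halting tape and return it. Verifying that this $f$ is total computable on $A$, diverges off $A$, has range inside $A$, and satisfies $\{f^i(\tape_0) : i \in \mathbb{N}\} = A$ (surjectivity onto $A$: every element of $A$ is eventually discovered by the search, hence appears as some $f^i(\tape_0)$) completes the argument; each of these is a routine but slightly fiddly bookkeeping check built on Theorem \ref{compotheorem} and the universal machine.

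I would organise the write-up as: (1) dispose of $A=\emptyset$; (2) prove c.e.-B $\Rightarrow$ c.e.-A by the linear-search machine above; (3) prove c.e.-A $\Rightarrow$ c.e.-B by dovetailing, spending most of the prose on the correct definition of $f$ so that its domain is exactly $A$ and it is partially computable. The genuinely delicate part is (3), and within it the insistence — forced by the paper's precise Definitions of computable/partially computable — that the enumerator diverge on all tapes outside $A$ rather than merely being undefined there in an informal sense.
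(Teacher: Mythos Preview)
The paper does not prove this theorem: it is one of the ``very well known facts whose proofs can be found in textbooks'' that the authors explicitly say they are stating without proof. So there is nothing to compare against; your proposal stands on its own.

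Your argument is the standard one and is correct in outline. Two comments. First, you are right to flag the empty-set asymmetry: as written, Definition~\ref{recenb} requires a seed $\tape_0\in A$, so $\emptyset$ is c.e.-A but not c.e.-B, and the biconditional literally fails there. The paper tacitly absorbs this into the later statement ``$A$ is c.e.\ iff $A=\emptyset$ or $A$ is the range of a computable function''; in your write-up you should state the nonemptiness hypothesis explicitly for the (A)$\Rightarrow$(B) direction. Second, in your construction of $f$ for that direction, the recipe ``find the first index $n$ at which $\tape$ appears in the dovetailed list, then return the $(n{+}1)$-st item'' can loop if the dovetailed list has repeats (e.g.\ if the $(n{+}1)$-st item equals the $n$-th, $f$ becomes eventually constant and misses later elements). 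You call this ``fiddly bookkeeping'', which is fair, but the fix should be named: either filter the dovetailed list to suppress repeats before defining $f$, or make $f$ return the next item \emph{not already among} $\tape_0,f(\tape_0),\ldots,\tape$; both are computable because the suppressed prefix is finite and tape equality is decidable in $\tpset$. With that adjustment the surjectivity claim $\{f^i(\tape_0):i\in\mathbb{N}\}=A$ goes through cleanly (handling finite $A$ by letting $f$ cycle once the list is exhausted).
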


Because of this theorem, we can call sets with these properties simply
\emph{computationally enumerable}, or \emph{c.e.}

\separate

So far, we have considered TMs working on arbitrary sets of
tapes. However, example \ref{excomparator} shows that we should exert
caution in choosing our sets of tapes, lest we be unable to compute
very fundamental operations, such as comparing the representations of
two numbers.

At this point, we are still considering tapes \emph{qua} tapes,
without assuming that they are the representation of anything else.
Even so, however, it seems obvious that there is a certain number of
properties that a \dqt{reasonable} set of tapes must satisfy if we
want to do some reasonable computation with them. If we do not limit
the set of tapes that we are considering, example \ref{excomparator}
shows that we can't even determine whether two parts of a tape are
equal: trying to do some meaningful computation in these circumstances
would be a daunting task, exacerbating the pointlessness of doing so.
It is beyond the scope of this paper to offer a theoretical
justification of the properties that we assume to hold for a set of
tapes, but some partial pragmatic justification can be derived from
the way in which we shall use these properties in the remainder of
this paper.

Some fundamental operations, such as comparison and copy, require the
capacity to represent pairs of tapes (or, more in general, tuples of
tapes) on a single tape. To this end, we consider a tape bijection:
\begin{equation}
  \bij{\_,\_} : \tpset\times\tpset\rightarrow\tpset
\end{equation}
such that the following operations are computable:
\label{bijfunctions}
\begin{tabbing}
MMMMMMMMMMMM\= \kill \\
Duplicate:  \> $\delta:\tape\mapsto\bij{\tape,\tape}$ \\
Swap:       \> $\sigma:\bij{\tape,\tape'}\mapsto\bij{\tape',\tape}$ \\
Projection: \> $\pi_1:\bij{\tape,\tape'}\mapsto\tape$ \\
Partial application: \> $\alpha_1[f]:\bij{\tape,\tape'}\mapsto\bij{f(\tape),\tape'}$ 
\end{tabbing}

Any two such bijections are computationally equivalent under the
definitions given in section \ref{relations}.

We assume that we can combine TMs in such a way that we can
detect when one of them enters an accepting state and continue the
computation consequently. In particular, we assume that we can define
a TM that recognizes whether its input is the same as a constant tape
$\tape$ (encoded in the stucture of the machine) and a TM (called
\emph{eq}) that, given a tape containing $\bij{\tape,\tape'}$ accepts
if $\tape=\tape'$ and rejects if $\tape\ne\tape'$.

The second projection $\pi_2$ and the partial application on the
second element of a bijection $\alpha_2(f)$ can be defined in terms of
the basic operations and the composition.

All the operations that we need in order to prove the results of this
paper can be defined in terms of these. The proof of this fact is easy
(one only has to show how to build the operations) but technical and we
omit it, as it is peripheral to the contents of the paper.

Under these assumptions, but conditioned to them, more of the typical
results can be proven.

\begin{theorem}
  \label{inverseth}
  If $f\in\comp$ and $\dom{f}$ is c.e. then $\rg{f}$ is
  c.e. Furthermore, if $f$ is injective, then
  $f^{-1}\in[\rg{f}\carrow\dom{f}]$ is computable.
\end{theorem}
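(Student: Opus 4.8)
The plan is to prove both halves of the statement with a single construction, a sequential \dqt{search} machine driven by the enumeration of $\dom{f}$ that the hypothesis supplies. Since $\dom{f}$ is c.e., by Definition~\ref{recenb} together with Theorem~\ref{rectheorem} it has a partially computable enumerator $g\in[\dom{f}\rightarrow\dom{f}]$ and a seed $\tape_0\in\dom{f}$ with $\dom{f}=\tset{g^i(\tape_0)\mid i\in\mathbb{N}}$ (the degenerate case $\dom{f}=\emptyset$, for which $\rg{f}=\emptyset$ and $f^{-1}$ is the empty function, I would dispatch separately at the outset). Writing $\tmach_f$ for a TM computing $f$ and $\tmach_g$ for one witnessing that $g$ is p.c., I would build a machine that, on input $\tape$, holds a \dqt{current} tape initialised to $\tape_0$ and repeats: compute $v:=\tmach_f(\text{current})$; use \emph{eq} to test whether $v=\tape$; if so, stop; otherwise overwrite \dqt{current} with $\tmach_g(\text{current})$ and loop. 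Orchestrating this — carrying \dqt{current} alongside the scratch data through the pairing $\bij{\_,\_}$, running the sub-machines and reacting to their accepting states — is exactly what the assumptions on the tape pairing and on combining machines made earlier in this section allow. What makes the search work without any dovetailing is that \dqt{current} always stays inside $\dom{f}$ (it starts there and $g$ maps $\dom{f}$ into $\dom{f}$), so $\tmach_g$ and $\tmach_f$ are only ever invoked on inputs where they converge, and \emph{eq} always halts; this is the single place where the c.e.-ness of $\dom{f}$ is really spent, since it hands us a concrete list of \dqt{safe} inputs for $f$.

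For the first assertion I would have this machine output the yes-tape $\nu$ whenever the \emph{eq} test succeeds. If $\tape\in\rg{f}$ then $\tape=f(\tape')$ with $\tape'=g^i(\tape_0)$ for some $i$, so the test succeeds at iteration $i$ and the machine halts with $\nu$; if $\tape\notin\rg{f}$ the test never succeeds and the machine diverges. Hence it witnesses that $\chi_{\rg{f}|\rg{f}}\in[\rg{f}\rightarrow\tset{\nu}]$ is p.c., i.e.\ that $\rg{f}$ is c.e.-A, and Theorem~\ref{rectheorem} then yields that $\rg{f}$ is c.e.

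For the second assertion, now assuming $f$ injective, I would modify the machine so that it outputs the \dqt{current} tape (instead of $\nu$) when the test succeeds. By injectivity $f^{-1}$ is a genuine function, with domain $\rg{f}$ and range $\dom{f}$. On input $\tape\in\rg{f}$ the machine halts exactly as before, returning some $g^i(\tape_0)\in\dom{f}$ with $f(g^i(\tape_0))=\tape$; injectivity forces this tape to be the unique preimage, so the output is $f^{-1}(\tape)$. That is precisely what is required for $f^{-1}$ to be computable, so $f^{-1}\in[\rg{f}\carrow\dom{f}]$.

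I expect the construction and its verification to be routine; the steps that need care are (i) checking that every sub-computation stays inside $\dom{f}$, so that a plain sequential search — with no interleaving — terminates at each step, which is the crux of the argument and the only genuine use of the hypothesis; (ii) the trivial but necessary handling of $\dom{f}=\emptyset$; and (iii) invoking injectivity at the right point in the second part, to be sure that the first preimage the search meets is \emph{the} value $f^{-1}(\tape)$ and not merely some preimage, so that the modified machine really computes the function $f^{-1}$.
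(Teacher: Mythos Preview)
The paper does not actually prove this theorem: it is one of the results the authors explicitly describe as ``very well known facts whose proofs can be found in textbooks'' and whose proofs they skip. So there is nothing in the paper to compare your argument against.

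On its own merits your proposal is correct and is the standard textbook argument. The one point worth underlining is the one you already flag as the crux: because the enumerator $g$ of Definition~\ref{recenb} maps $\dom{f}$ into itself and you start at the seed $\tape_0\in\dom{f}$, every call to $\tmach_f$ and $\tmach_g$ is on an input where they are guaranteed to halt, so a straight sequential search works without dovetailing. Your use of the pairing $\bij{\_,\_}$, of \emph{eq}, and of the machine-combination assumptions is exactly what the paper licenses in the paragraph preceding the theorem. The separate treatment of $\dom{f}=\emptyset$ and the appeal to injectivity to pin down the unique preimage in the second part are both necessary and correctly placed. Note, finally, that for the clause $f^{-1}\in[\rg{f}\carrow\dom{f}]$ the paper's notion of ``computable'' only constrains behaviour on $\rg{f}$, so you need not (and do not) argue anything about inputs outside $\rg{f}$.
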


Note that the function $f^{-1}$ is computable only on $\rg{f}$:
the theorem doesn't guarantee that this set be computable. \\

The set of tapes 
$\tpset$ is computationally enumerable. Given an enumerator
$E\in[\tpset\carrow\tpset]$ and an initial tape $\tape_0$, for a given
tape $\tape$, define $\#\tape=n$ ($n\in{\mathbb{N}}$) if
$\tape=E^n(\tape_0)$. As we will see in the following, this is
equivalent to considering a representation of natural numbers
which allows the computation of the successor function.

The set of TMs is also countable.  One can thus consider the $m$th
Turing machine $\tmach_m$, for $m\in{\mathbb{N}}$ under a certain
enumeration. In order to allow the use of a universal Turing Machine,
the enumeration that we use must allow us, given a tape with the
representation of a number $m$ on it, to execute $\phi_m$.  Unless we
state the contrary, the standard enumeration of Turing machines which
we will consider will have this universality property.

Given a TM $\tmach_e$ (under a particular enumeration of Turing
machines), $e\in{\mathbb{N}}$, define
\begin{equation}
  W_e = \tset{\tape\in\tpset | \tmach_e(\tape)\tmstop}
\end{equation}
We indicate with $W_{e,n}$, $e,n\in{\mathbb{N}}$ the set of
tapes that $\tmach_e$ accepts after $n$ steps.\\

\begin{theorem}
  A set $A\subseteq\tpset$ is c.e. iff $A=\emptyset$ or $A$ is the
  range of a computable function.\\
\end{theorem}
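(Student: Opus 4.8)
This is the tape-level analogue of the classical characterisation of c.e.\ sets as the empty set together with the ranges of total computable functions, and the argument is the classical one. I read \dqt{computable function} here as one whose domain is c.e.\ (in particular a total computable $g\in[\tpset\carrow{A}]$, for which $\dom{g}=\tpset$), since an arbitrary computable function constrains only its own domain and the empty-set clause would otherwise be redundant.

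For the \dqt{if} direction I would argue by cases. If $A=\emptyset$, a Turing machine that never halts witnesses Definition~\ref{recena}: the implication $\tape\in{A}\Rightarrow\tmach(\tape)=\nu$ is vacuous and every $\tape\notin{A}$ (that is, every tape) gives $\tmach(\tape)\tmnostop$, so $\emptyset$ is c.e. If instead $A=\rg{g}$ for a computable $g$ with c.e.\ domain, Theorem~\ref{inverseth} immediately gives that $\rg{g}=A$ is c.e.; when $g$ is total this is the case $\dom{g}=\tpset$, which is c.e.\ as recalled in the text.

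For the \dqt{only if} direction, assume $A$ is c.e.\ and $A\ne\emptyset$; the task is to build a total computable $g$ with $\rg{g}=A$. By Theorem~\ref{rectheorem} and Definition~\ref{recena} there is a machine $\tmach_e$ with $A=W_e=\tset{\tape\in\tpset\mid\tmach_e(\tape)\tmstop}$, and, using the universal machine, the predicate \dqt{$\tmach_e$ halts on $\tape_1$ within $k$ steps} is decidable. Fix $a_0\in{A}$. Since $\bij{\_,\_}$ is onto $\tpset$, every tape is $\bij{\tape_1,\tape_2}$ for unique $\tape_1,\tape_2$ recoverable by $\pi_1,\pi_2$; I would define $g$ on $\bij{\tape_1,\tape_2}$ to simulate $\tmach_e$ on $\tape_1$ for $\#\tape_2$ steps, outputting $\tape_1$ if $\tmach_e$ has halted by then and $a_0$ otherwise. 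This $g$ has domain $\tpset$ and is the composition (Theorem~\ref{compotheorem}) of the projections, the computable encoding $\tape\mapsto\#\tape$, a bounded universal simulation, and a conditional, hence total computable. Its outputs lie in $W_e=A$ or equal $a_0\in{A}$, so $\rg{g}\subseteq{A}$; conversely, if $a\in{A}$ then $\tmach_e$ halts on $a$ within some $m$ steps, and taking $\tape_2$ with $\#\tape_2\ge{m}$ (for instance $\tape_2=E^m(\tape_0)$) gives $g(\bij{a,\tape_2})=a$, whence $A\subseteq\rg{g}$.

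The dovetailing construction itself is routine. The points that I expect to need the most care are fixing the intended meaning of \dqt{computable function} so that the \dqt{if} direction is genuinely true, and checking that the $g$ above is \emph{total} computable --- this rests on the availability of the universal Turing machine for bounded simulation and of the basic tape operations ($\pi_1$, $\pi_2$, the bijection, the encoding $\#$) together with composability from Theorem~\ref{compotheorem}; the harmless wrinkle that the enumeration $E$ of $\tpset$ need not be injective is absorbed by allowing any step bound $\ge{m}$.
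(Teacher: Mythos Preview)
The paper does not actually supply a proof of this theorem: it is one of the results the authors announce as \dqt{very well known facts whose proofs can be found in textbooks} and for which they \dqt{skip the proofs}. So there is no paper-side argument to compare against; your proposal stands on its own, and it is the standard textbook argument the authors are implicitly deferring to.

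Your proof is correct. The dovetailing construction for the \dqt{only if} direction is exactly the classical one, and your use of Theorem~\ref{inverseth} for the \dqt{if} direction is the natural internal shortcut given the paper's setup. You are also right to flag the reading of \dqt{computable function}: under the paper's bare Definition of computable (which places no constraint on $\dom{f}$), the identity restricted to any non-c.e.\ $Q\subseteq\tpset$ is computable with range $Q$, so the \dqt{if} direction would fail. Reading the clause as \dqt{range of a total computable function} (equivalently, of a computable function with c.e.\ domain) is the intended meaning and is what makes the $A=\emptyset$ disjunct non-redundant; the paper's own Theorem~\ref{inverseth}, which needs the c.e.\ hypothesis on $\dom{f}$, supports this reading.

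One small tightening: rather than relying on $\#\tape_2$ (which, as you note, may be multi-valued if $E$ is not injective), you can make $g$ unambiguously total by computing the least $k$ with $E^k(\tape_0)=\tape_2$ before simulating for $k$ steps; this is computable since $E$ is, and it removes the wrinkle entirely without changing the shape of the argument.
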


Finally, we give a version of the halting theorem which can be
formulated in exclusive terms of Turing machines.

\begin{theorem}
  \label{knotcomp}
	Let $K = \tset{\tape | \phi_{\#\tape}(\tape)\tmstop}$. $K$ is not computable
\end{theorem}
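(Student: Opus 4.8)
The plan is to run the classic diagonalization argument, recast in the tape-based, representation-free language set up above. Suppose, for contradiction, that $K$ is computable; then by Definition the characteristic function $\chi_K\in[\tpset\carrow\tpset]$ is computable, with $\chi_K(\tape)=\nu$ when $\phi_{\#\tape}(\tape)\tmstop$ and $\chi_K(\tape)=o$ otherwise. The first step is to build from $\chi_K$ a \dqt{diagonalizer} TM $\psi$ whose behaviour on input $\tape$ is: compute $\chi_K(\tape)$; if the result is $o$, halt (say, output $\nu$); if the result is $\nu$, enter a non-halting loop. Concretely, $\psi$ is obtained by composing the machine for $\chi_K$ with a machine that recognizes the constant tape $\nu$ (one of the standard combinators assumed in the paragraph after Theorem~\ref{inverseth}) and then either halts or diverges according to the outcome of that test. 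Thus $\psi(\tape)\tmstop$ iff $\phi_{\#\tape}(\tape)\tmnostop$.

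The second step is to locate $\psi$ in the standard enumeration: since $\psi$ is an honest Turing machine, $\psi=\phi_d$ for some index $d\in{\mathbb{N}}$, and because the enumeration $\#\cdot$ of tapes is onto (it is induced by an enumerator $E\in[\tpset\carrow\tpset]$ as described above), there is a tape $\tape_d$ with $\#\tape_d=d$. The third step is to evaluate the equivalence from step one at $\tape=\tape_d$: we get $\phi_d(\tape_d)\tmstop$ iff $\phi_{\#\tape_d}(\tape_d)\tmnostop$, i.e. $\phi_d(\tape_d)\tmstop$ iff $\phi_d(\tape_d)\tmnostop$, which is a contradiction. Hence no TM computes $\chi_K$, so $K$ is not computable.

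I expect the only real subtlety — and the step worth spelling out — to be the construction of $\psi$ in step one, specifically arguing that \dqt{test whether the output is $\nu$ and then loop} is effected by a genuine Turing machine. This is exactly the kind of combinator the paper has already licensed (detecting accepting states, recognizing a constant tape, composition via Theorem~\ref{compotheorem}), so it amounts to assembling those pieces rather than proving anything new; still, one should note that $\psi$ is total on the relevant inputs precisely because $\chi_K$ is assumed computable (hence defined everywhere on $\tpset$). Everything else — countability of the set of TMs, surjectivity of $\#\cdot$, and the final self-referential substitution — is immediate from the machinery already in place, so the argument is short once $\psi$ is in hand.
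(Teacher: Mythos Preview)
Your argument is correct, but it diagonalizes differently from the paper. You build a \emph{partial} machine $\psi$ whose halting behaviour is flipped on the diagonal ($\psi(\tape)\tmstop\Leftrightarrow\phi_{\#\tape}(\tape)\tmnostop$) and derive the contradiction at the level of termination. The paper instead builds a \emph{total} computable function
\[
f(\tape)=\begin{cases}E(\phi_{\#\tape}(\tape)) & \text{if }\chi_K(\tape)=\nu,\\ o & \text{if }\chi_K(\tape)=o,\end{cases}
\]
which, using the enumerator $E$ as a successor, disagrees with $\phi_{\#\tape}$ on the value (not the halting bit) whenever the latter halts; since $f$ is total it has an index $\tau_0$, and evaluating at $\tau_0$ yields $f(\tau_0)\ne\phi_{\#\tau_0}(\tau_0)=f(\tau_0)$. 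Your route is the classic halting-problem flip and needs only the constant-tape recogniser plus a deliberate loop; the paper's route stays within total functions and leans on the enumerator $E$ already introduced, which fits its emphasis on enumerability as a structural ingredient. Both are short once $\chi_K$ is assumed computable, and neither requires anything beyond the combinators the paper has licensed.
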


\begin{proof}
  Suppose $K$ has a computable characteristic function $\chi_K$; define
  \begin{equation}
    f(\tape) = 
    \begin{cases}
      E(\phi_{\#\tape}(\tape)) & \mbox{if $\chi_K(\tape)=\nu$} \\
      o                       & \mbox{if $\chi_K(\tape)=o$}
    \end{cases}
  \end{equation}
  then $f\in\comp$. Thus, we know that there exists $\tau_0$ such that
  for all $\tau \in \tpset$, $\phi_{\#\tau_0}(\tau)\tmstop$ and
  $f(\tau) = \phi_{\#\tau_0}(\tau)$ but, for all $\tape$ such that
  $\phi_{\#\tau}(\tau)\tmstop$, $f(\tape) = E(\phi_{\#\tape}(\tape))
  \ne\phi_{\#\tape}(\tape)$. In particular,
  $\phi_{\#\tau_0}(\tau_0)\tmstop$ and thus $f(\tau_0) \ne
  \phi_{\#\tau_0}(\tau_0)$. This contradiction proves that $K$ must be
  not computable.
\end{proof}

This theorem shows that not all undecidability can be whisked away
with a suitable choice of representation, the way we have done in
example \ref{exhalting}. Undecidable problems can be defined based
purely on tape computation, without assuming that the tapes are
representations of abstract sets.

\section{Representations}
Having established a computation formalism and some basic results, we
can now offer a formal definition of representation. 

\begin{definition}
  A \emph{representation} of an abstract set $A$ into the set of tapes
  $\tpset$ is an injective function $\repa\in(A\rightarrow\tpset)$
\end{definition}

The \emph{image} of $A$ under $\repa$ is the set of tapes
$\repa(A)\subseteq\tpset$. Note that if $\repa$ is a representation of
$A$ and $A'\subseteq{A}$, then $\repa$ is also a representation of
$A'$ but its properties as a representation of $A$ might be quite
different from its properties as a representation of $A'$. For
example, the set $\repa(A)$ might be c.e. while $\repa(A')$ may fail
to be. As we'll see in the following, this fact has quite far-reaching
consequences.

The set $\tpset$ is countable and, since representations are required
to be injective, the abstract set $A$ is also countable: we can't
represent any set with cardinality higher than $\aleph_0$.

\begin{definition}
  Given a function $f\in[A\rightarrow{B}]$ and two representations
  $\repa$ and $\repb$ of $A$ and $B$, respectively, a
  \emph{representation} of $f$ is the function
  $\frep{f}{\repa\repb}:[\repa(A)\rightarrow\repb(B)]$ such that the
  following diagram commutes
  \[
  \xymatrix{
    A \ar[r]^{f} \ar[d]_{\repa} & B \ar[d]^{\repb} \\
    \repa(A) \ar[r]_{\frep{f}{\repa\repb}} & \repb(B)
  }
  \]
  If $\frep{f}{\repa\repb}\in\comp$, then we say that $f$ is
  \emph{computable} in the pair of representations $\repa$,
  $\repb$. We shall indicate with $\rcomp{\repa}{\repb}$ the set of
  functions computable in $(\repa,\repb)$.
\end{definition}

\bigskip

\remark Note that even if $A=B$ we don't assume necessarily that
$\repa=\repb$: the same set can be represented in two different ways
as arguments and as result of the function. If $A = B$ and $\repa =
\repb$, we can abbreviate $\rcomp{\repa}{\repa}$ as $\comp(\repa)$.

On the other hand, we can assume that $A = B$ and use only one
representation. If $A \neq B$, and given representations $\repa$ and
$\repb$ we can always consider the set $C=A\sqcup{B}$ (disjoint union)
and a representation $\repc$ such that $\repc(a)=\repa(a)$ and
$\repc(b)=\repb(b)$, if $a\in{A}$ and $b\in{B}$. If for certain
$a\in{A}$ and $b\in{B}$, $\repa(a)=\repb(b)$, then we can consider a
partially disjoint union of $A$ and $B$ in which we consider that
$a=b$. Using just one set and one representation for input and output
becomes particularly natural when considering that most computations
are to be composed, and they cannot be composed if they are on
different sets. In other words, usual computation is defined for
functions $(\mathbb{N}\rightarrow\mathbb{N})$ which can be
chained.

\remark The representation of a function is always defined: $\repa$ is
injective in $\repa(A)$ and therefore invertible, so we have
$\frep{f}{\repa\repb}=\repb\circ{f}\circ\repa^{-1}$. The representation,
however, may fail to be computable.

Computability is not an immediately transferred property. Consider two
functions $f_1,f_2\in[A\rightarrow{B}]$ in $\rcomp{\repa}{\repb}$, and
assume that $f_1\in\rcomp{\repa'}{\repb'}$. This doesn't entail that
$f_2\in\rcomp{\repa'}{\repb'}$. To see this, consider examples
\ref{excomparator} and \ref{exhalting}. The identity function is
computable in both representations (it is computable in any
representation), but the comparator in example \ref{excomparator} and
the halting decision function in example \ref{exhalting} are computable
only in one of them.

\begin{lemma}
  The identity function of any set is computable in any representation.
\end{lemma}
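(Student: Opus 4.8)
The plan is to unfold the definitions and reduce the claim to the elementary observation that the \dqt{do nothing} Turing machine computes the identity map on any set of tapes. Fix an abstract set $A$ and a representation $\repa\in(A\rightarrow\tpset)$. By the remark immediately following the definition of the representation of a function, the representation of the identity $\mathrm{id}_A\in[A\rightarrow{A}]$ in the pair $(\repa,\repa)$ is
\[
  \frep{(\mathrm{id}_A)}{\repa\repa}=\repa\circ\mathrm{id}_A\circ\repa^{-1}=\repa\circ\repa^{-1},
\]
and since $\repa$ is injective, hence invertible on $\repa(A)$, this is exactly the identity map on the set of tapes $\repa(A)$, i.e. the restriction to $\repa(A)$ of the identity on $\tpset$.

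Next I would exhibit a Turing machine $\tmach$ witnessing computability of this map. Take $\tmach$ to be the machine whose initial state is already a halting state (equivalently, a machine that performs no transition): on every input tape $\tape\in\tpset$ it halts at once with the tape unchanged, so $\tmach(\tape)=\tape$. In particular $\tmach(\tape)=\tape=\frep{(\mathrm{id}_A)}{\repa\repa}(\tape)$ for every $\tape\in\repa(A)$, which is precisely the condition in the definition of a computable function with $Q=P=\repa(A)$. Hence $\frep{(\mathrm{id}_A)}{\repa\repa}\in\comp$, that is, $\mathrm{id}_A\in\comp(\repa)$; as $\repa$ was an arbitrary representation of an arbitrary set, the lemma follows.

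There is essentially no obstacle here, but two points deserve a word. First, nothing is required of $\tmach$ outside $\repa(A)$ — the definition of computability constrains behaviour only on the domain $Q$ — so the fact that $\repa(A)$ may fail to be computable or even c.e. is irrelevant (the same machine even shows $\mathrm{id}_A$ is total computable when $\repa(A)=\tpset$, but the statement only needs plain computability). Second, the statement concerns a single representation used for both domain and range: for two genuinely different representations $\repa\ne\repb$ of $A$ one would instead face $\frep{(\mathrm{id}_A)}{\repa\repb}=\repb\circ\repa^{-1}$, a translation map that need not be computable, so the restriction $\repa=\repb$ implicit in the notation $\comp(\repa)$ is essential and I would make it explicit in the proof.
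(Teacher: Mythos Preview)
Your proof is correct and follows exactly the same approach as the paper, which simply notes that the Turing machine with a single accepting state computes the identity. Your version merely unfolds the definitions more explicitly and adds the (helpful) caveats about behaviour outside $\repa(A)$ and about the implicit assumption $\repa=\repb$.
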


\begin{proof}
  The Turing machine with just one accepting state computes the identity.
\end{proof}

\begin{lemma}
  \label{finitecomputability}
  Let $A$ be a finite set. Let $f \in [A \rightarrow A)$ be any
    function on this set. Then, $f$ is computable in any
    representation $a$ of $A$.
\end{lemma}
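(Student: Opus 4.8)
The plan is to exploit the finiteness of $A$ to reduce the computation of $f$ to a finite case analysis hard-wired into the transition function of a Turing machine. Since $A$ is finite, its image $\repa(A)\subseteq\tpset$ is a finite set of tapes, say $\repa(A)=\tset{\tape_1,\dots,\tape_k}$, and $f$ induces a representation $\frep{f}{\repa\repa}\in[\repa(A)\rightarrow\repa(A))$ which is itself a finite function: it is completely specified by the finite table $\tape_i\mapsto\frep{f}{\repa\repa}(\tape_i)$ for $i=1,\dots,k$. (Here $\frep{f}{\repa\repa}(\tape_i)=\repa(f(\repa^{-1}(\tape_i)))$, which is well defined because $\repa$ is injective and hence invertible on $\repa(A)$.) It suffices to show that any such finite function between finitely many tapes is computable in the sense of the first definition, i.e. that there is a TM $\tmach$ with $\tmach(\tape_i)=\frep{f}{\repa\repa}(\tape_i)$ for every $i$; there is no requirement on the behaviour of $\tmach$ outside $\repa(A)$, which is what makes this easy.

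First I would build, for each $i$, a recognizer for the constant tape $\tape_i$: by the assumptions collected around the discussion of $\tpset$ (in particular the stipulated existence of a TM that recognizes whether its input equals a fixed constant tape encoded in the machine), there is a TM that halts in an accepting state on input $\tape_i$ and otherwise behaves arbitrarily. Next I would build, for each $i$, a TM that erases its input and writes the constant tape $\frep{f}{\repa\repa}(\tape_i)$; writing a fixed finite pattern is a routine primitive operation of a Turing machine. Then I would chain these into a single machine $\tmach$ that, on any input, tests successively whether the input equals $\tape_1$, then $\tape_2$, and so on; as soon as test $i$ succeeds it branches to the subroutine that outputs $\frep{f}{\repa\repa}(\tape_i)$ and halts. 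Since the tests are run on a fixed finite list, this is a finite combinatorial composition of TMs, and by the combination-of-machines assumptions made earlier in the section (detecting accepting states and continuing accordingly) such a $\tmach$ exists. On input $\tape_i\in\repa(A)$ the machine correctly outputs $\frep{f}{\repa\repa}(\tape_i)=f_{\repa\repa}(\tape_i)$, which is exactly the condition for $f_{\repa\repa}\in\comp$, hence $f\in\comp(\repa)$.

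The only subtlety — and the closest thing to an obstacle — is that the tests $\tape = \tape_i$ must be arranged so that a test failing does not destroy the input before the next test is run, or else one must keep a clean copy of the input around (using the duplication primitive $\delta$ and the projections $\pi_1,\pi_2$ guaranteed above) and feed a fresh copy to each successive equality test. This is a purely mechanical bookkeeping matter, not a conceptual one, and in fact the paper elsewhere explicitly declines to carry out such constructions in detail, remarking that building the needed operations from the stated primitives is "easy but technical". I would therefore state the reduction to a finite table and the existence of the branching machine, invoke the machine-combination assumptions, and leave the wiring diagram to the reader, noting that finiteness of $A$ is used exactly once and essentially: it is what makes the case analysis finite and hence realizable inside a single transition function.
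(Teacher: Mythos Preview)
Your proposal is correct and follows essentially the same approach as the paper: both build, for each element of the finite set, a constant-tape comparator, run all of them on the input, and then write out the corresponding hard-wired output value. Your version is more explicit about the bookkeeping (sequential testing and preserving the input between tests), but the core idea of a finite table lookup via constant recognizers is identical.
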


\begin{proof}
  For each $a \in A$, we can build a Turing machine $\tmach_a$ such
  that, given the representation of another element $\repa(a') \in
  \repa(A)$, checks whether $\repa(a) = \repa(a')$. Since $A$ is
  finite, we can build a Turing machine which executes all machines
  $\tmach_a$ on any input $a'$. Since $a' \in A$, then one and only
  one of those machines will have a positive result. If the Turing
  machine with positive result was $\tmach_a$, then $a' = a$ and thus
  $f(a') = f(a)$. Write $f(a)$ in the result.
\end{proof}

\begin{lemma}
  \label{latest}
  Let $f \in [A \rightarrow A)$ be a constant function. Then, $f$ is
    computable in any representation $\repa$ of $A$.\\
\end{lemma}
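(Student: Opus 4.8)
The plan is to unwind the definition of computability in a pair of representations and reduce the statement to computing a constant \emph{tape} function. Since $f$ is constant, say $f(x)=c$ for all $x\in A$ with a fixed $c\in A$, the commuting square defining the representation of $f$ forces $\frep{f}{\repa\repa}=\repa\circ f\circ\repa^{-1}$ to send every tape $\repa(x)\in\repa(A)$ to the single tape $\repa(c)$. So it suffices to exhibit a Turing machine that, on every input in $\repa(A)$, halts with output exactly $\repa(c)$, and then appeal to the definition of $\comp(\repa)$.

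The next step uses that $\repa(c)$ is a single element of $\tpset$ and hence has only finitely many non-blank cells. The machine I have in mind works in two phases: first it brings the (arbitrary) input tape to a canonical blank form, and then it writes the finitely many non-blank symbols of $\repa(c)$ at their prescribed positions and halts. Since the positions and symbols of $\repa(c)$ can be hard-wired into the finite control, the second phase is a bounded, purely mechanical write; exhibiting such a machine establishes $f\in\comp(\repa)$.

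The main obstacle is the first phase: erasing an input tape whose non-blank content may, a priori, lie arbitrarily far from the initial head position, while still guaranteeing that the machine halts on every tape of $\repa(A)$. This is exactly the kind of elementary manipulation that Example \ref{excomparator} shows can fail for pathological sets of tapes, so here I would lean on the standing assumptions on ``reasonable'' sets of tapes adopted earlier — equivalently, on the assumed availability of the comparison and copying machinery built around $\bij{\_,\_}$, including the machine that recognises a constant tape — which is precisely what licenses such clean-up operations on the domains we consider.

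Finally, I would note why the obvious shortcut does not by itself close the argument. Although $f$ has finite range, its \emph{domain} $A$ need not be finite, so Lemma \ref{finitecomputability} does not apply directly; and factoring $f$ as a constant surjection $A\to\{c\}$ followed by an inclusion $\{c\}\hookrightarrow A$ only makes the inclusion trivial (it represents to the identity on the one-point tape set $\{\repa(c)\}$) while leaving the constant surjection — i.e. the clean-up step above — as the genuine content of the lemma.
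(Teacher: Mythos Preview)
Your proposal is correct and takes essentially the same approach as the paper: the paper's proof simply asserts the existence of a Turing machine that replaces any input tape by the constant tape $\repa(x)$, and your two-phase description (erase, then write the finitely many non-blank symbols of $\repa(c)$) is exactly an unpacking of that. The caveat you raise about the erase phase---that without the standing assumptions on comparison/boundary detection it can fail, as in Example~\ref{excomparator}---is precisely what the paper records in the remark immediately following the lemma, so you have anticipated both the argument and its qualification.
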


\begin{proof}
  Let $\{x\} = Rg(f)$. Consider $\tape = \repa(x)$. There exists a
  Turing machine which replaces any input tape for the constant tape
  $\tape$.
\end{proof}

\remark%
Lemma \ref{latest} requires the assumption that the comparator be
computable or at least that there exists a way to verify when all the
input has been read. Consider the binary representation considered in
example \ref{excomparator}. Replacing the input tape with a constant
tape is not a computable problem in such representation. Even if the
boundaries of the constant tape are previously know, the boundaries of
the input tape are not, and thus it is not possible to decide when all
of the input tape's extra data have been erased.

Lemma \ref{finitecomputability} does not require these assumptions,
however, as we are not implementing a general comparator but rather a
comparator for a constant tape, which can be implemented in any
representation.

\bigskip

With these definitions at hand, consider example \ref{exhalting}
again. The usual proof of the halting problem fails in this case
because the new representation doesn't necessarily allow us to carry
on the manipulations that the proof of the theorem requires. In
particular, given the pair $\bij{\phi,\tape}$, the theorem assumes, by
\emph{reductio ad absurdum}, the existence of a TM $\phi_h$ implementing a
function $h$ such that
\begin{equation}
  h(\bij{\phi,\tape}) = 
  \begin{cases}
    1 & \mbox{if $\phi(\tape)\tmstop$} \\
    0 & \mbox{if $\phi(\tape)\tmnostop$}
  \end{cases}
\end{equation}
The diagonalization argument then assumes that given a
tape $\tape$, we execute $h(\bij{\tape,\tape})$; the argument
therefore assumes that the representation used for the pair
$\bij{\phi,\tape}$ allows us the computation of the function
\begin{equation}
  \tape \mapsto \bij{\tape,\tape}
\end{equation}
but if we represent the pair as $\bij{\bij{\phi,\tape},b}$, where $b$
is the solution of the halting problem for $\phi$ and $\tape$, then
the function
\begin{equation}
  \tape \mapsto \bij{\bij{\tape,\tape}, b}
\end{equation}
is computable only by solving the halting problem, and the argument
becomes circular.

\bigskip

Most of the well known results in computability theory are expressed
in terms of sets of natural numbers. We are now in the condition to
express them again but, this time, relative to the representation
used. We shall consider here only the concepts that we shall use in
the following. We shall consider the set $A$ that we are representing
as fixed.

\begin{definition}
  The \emph{domain} of the TM $\tmach_e$ in a representation $\repx$
  is
  \begin{equation}
    W_e^\repx = \tset{ a | a \in A \wedge \tmach_e(\repx(a))\tmstop}
  \end{equation}
\end{definition}

\begin{definition}
  If $U\subseteq{A}$, and $\repa$ and $\repb$ are representations of
  $A$ and $B$, respectively; a function $f\in[U\rightarrow{B}]$ is
  p.c. in the representations $\repa$, $\repb$ if there is
  $e\in{\mathbb{N}}$ such that
  \begin{equation}
    \begin{aligned}
      x \in U &\Rightarrow \tmach_e(\repa(x)) = \repb(f(x)) \\
      x \not\in U &\Rightarrow \tmach_e(\repa(x))\tmnostop
    \end{aligned}
  \end{equation}
\end{definition}

Clearly, partially computable functions are computable functions.

\begin{definition}
  A function $f\in[A\rightarrow{B}]$ is total computable in the
  representations $\repa$, $\repb$ if it is p.c. and
  $\repa(A)=\tpset$.
\end{definition}

Computability and enumerability of sets are extended in the obvious
way through the computability of their characteristic
functions. However, note that given a set $A$ and a representation
$\repa$ of $A$, $\repa(A)$ need not be necessarily c.e.. This makes
some basic and intuitive results become false when extended, as they
rely on the enumerability of $\tpset$. For example, if a set $B
\subset A$ is computable under representation $\repa$ (that is, the
function $\chi_B \in [A \rightarrow \{\nu,o\} \subset A]$ is
computable under representation $\repa$); then $B$ need not be
necessarily computationally enumerable. This suggests that
enumerability might be (along with comparation, duplication and the
other aforementioned elemental functions) another reasonable
hypothesis to require for a representation. We will come back to this
idea further on.

\subsection{Relations between representations}
\label{relations}

\begin{definition}
  Let $\repx$ and $\repy$ be two representations of a set $A$: $\repx$
  is (computationally) transformable in $\repy$ ($\repy\wleq\repx$)
  if there is a function $f\in[\repx(A)\twoheadrightarrow\repy(A)]$
  such that $\repy=f\circ\repx$, viz. such that
  \begin{equation}
    \xymatrix{
      &    \repx(A) \ar@{->>}[dd]^f \\
      A \ar[ur]^\repx \ar[dr]_\repy & \\
      &    \repy(A)
    }
  \end{equation}
  commutes.
\end{definition}

\bigskip

\remark
If we drop the computability requirement, then the function $f$ always
exists and is unique in $[\repx(A)\rightarrow\repy(A)]$, since the
representations are one-to-one and onto. The previous definition,
therefore, is tantamount to requiring that
$\repy\circ\repx^{-1}\in\comp$.

\separate

\begin{definition}
  Two representations $\repx$ and $\repy$ are \emph{transformationally
    equivalent} ($\repx\seqv\repy$) if $\repx\wleq\repy$ and
  $\repy\wleq\repx$.
\end{definition}

\bigskip

\remark
Transformability is transitive, as can be seen from the commutativity
of the following diagram and by the fact that $\comp$ is closed under
composition.
\begin{equation}
  \xymatrix{
    & \repx(A)  \ar@{->>}[d]^f \ar@{->>}@/^2em/[dd]^{g\circ{f}} \\
    A \ar[ru]^{\repx} \ar[r]^{\repx'} \ar[rd]^{\repx''} & \repx'(A) \ar@{->>}[d]^g \\
    & \repx''(A) 
  }
\end{equation}
Therefore $\seqv$ is an equivalence relation. Moreover, $\wleq$
induces a partial ordering on the equivalence classes.

\begin{theorem}
  Let $\repx,\repy$ be representations of a set $A$. If
  $\repy\wleq\repx$ and $\repx(A)$ is c.e., then $\repy\seqv\repx$.
\end{theorem}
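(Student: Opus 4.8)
The plan is to exhibit the missing direction of the equivalence, namely $\repx\wleq\repy$, by inverting the transformation function that witnesses $\repy\wleq\repx$. By hypothesis there is a computable $f\in[\repx(A)\carrow\repy(A)]$ with $\repy=f\circ\repx$. The first observation is that $f$ is a bijection from $\repx(A)$ onto $\repy(A)$: indeed $\dom{f}=\repx(A)$ and $\rg{f}=\repy(A)$ by the bracket notation, and since $\repx,\repy$ are injective, $f=\repy\circ\repx^{-1}$ is one-to-one. So $f$ is in particular injective, and its domain $\repx(A)$ is c.e. by assumption.

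Next I would invoke Theorem \ref{inverseth} directly: since $f\in\comp$, $\dom{f}=\repx(A)$ is c.e., and $f$ is injective, the theorem yields $f^{-1}\in[\rg{f}\carrow\dom{f}]=[\repy(A)\carrow\repx(A)]$, i.e. $f^{-1}$ is computable. Now set $g=f^{-1}$. From $\repy=f\circ\repx$ and $f^{-1}\circ f=\mathrm{id}_{\repx(A)}$ we get $g\circ\repy=f^{-1}\circ f\circ\repx=\repx$, so the triangle witnessing $\repx\wleq\repy$ commutes with the computable map $g\in[\repy(A)\carrow\repx(A)]$. Hence $\repx\wleq\repy$.

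Finally, combining $\repx\wleq\repy$ with the given $\repy\wleq\repx$ gives $\repx\seqv\repy$ by definition of transformational equivalence, which is the claim. The proof is essentially a one-line application of Theorem \ref{inverseth}; the only point requiring a word of care is the book-keeping that $f$ really is a bijection onto $\repy(A)$ with domain exactly $\repx(A)$, so that $f^{-1}$ is defined on all of $\repy(A)$ and lands back on all of $\repx(A)$ — but this is immediate from injectivity of the representations and the bracket convention. There is no substantive obstacle; the content of the statement is exactly that c.e.-ness of the source image is the hypothesis needed to upgrade a one-way transformability into an equivalence.
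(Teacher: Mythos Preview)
Your proof is correct and follows essentially the same route as the paper: both invoke Theorem~\ref{inverseth} to invert the computable transformation $f$ witnessing $\repy\wleq\repx$, using that its domain $\repx(A)$ is c.e.\ and that $f$ is injective, to obtain the computable $f^{-1}$ witnessing $\repx\wleq\repy$. Your version spells out the bijectivity verification more carefully, but the argument is the same.
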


\begin{proof}
  Since $\repy\wleq\repx$, there is a transformation
  $f\in[\repx(A)\twoheadrightarrow\repy(A)]$ and since $\repx(A)$ is
  c.e., by theorem \ref{inverseth}, there is
  $f^{-1}\in[\repy(A)\twoheadrightarrow\repx(A)]$ (also
  computable). Therefore $\repx\wleq\repy$ and $\repx\seqv\repy$.
\end{proof}

This theorem gives us some indications on the hierarchy induced by the
relation $\wleq$; it tells us that all representations whose
range is c.e. are at the bottom of the hierarchy: they are
computationally transformable only in relations equivalent to
them. This is no longer the case if we drop the c.e. requirement.

While transformability is an easy to understand and sound concept, it
fails to express all the concerns on computability of representations
we are trying to consider here. We introduce here another relation
which is neither generally stronger nor weaker than the
transformability relation.

\begin{definition}
  Let $\repx$ and $\repy$ be two representations of a set $A$; $\repx$
  is \emph{computationally stronger} than $\repy$ (written
  $\repy\sleq\repx$) if
  $\ocomp{\repy}\subseteq\ocomp{\repx}$.
\end{definition}

That is, $\repx$ is computationally stronger than $\repy$ if all
functions computable in the representation $\repy$ are also computable
in the representation $\repx$. This induces, of course, another
equivalence relation and a partial ordering among equivalence classes.

\begin{definition}
  Two representations $\repx$ and $\repy$ are \emph{computationally equivalent}
  ($\repx\weqv\repy$) if $\repx\sleq\repy$ and
  $\repy\sleq\repx$.
\end{definition}

\begin{definition}
  Two representations $\repx$ and $\repy$ are \emph{incomparable}
  ($\repx\wpar\repy$) if neither $\repx\sleq\repy$ nor
  $\repy\sleq\repx$.
\end{definition}

\example
We will provide a new, more technical formulation of example
\ref{exhalting}, now that we have the necessary concepts.  Let $\repc$
be the standard representation of ${\mathbb{N}}$, in which a number
$n$ is represented by a string of $n+1$ symbols \dqt{1} followed by a
\dqt{0}. Build a representation $\repy$ as follows. Define the
function $f(n)$ as
\begin{equation}
  f(n) = 
  \begin{cases}
    1 & \mbox{if $\tmach_n(n)\tmstop$} \\
    0 & \mbox{if $\tmach_n(n)\tmnostop$} 
  \end{cases}
\end{equation}
Define 
\begin{equation}
  \repy(n) = \bij{\repc(n),\repc(f(n))}
\end{equation}
It is clear that $\repy({\mathbb{N}})$ is not c.e. and that
$\repc\wleq\repy$, since $\repc=\pi_1\circ\repy$ and $\pi_1$ is
computable in $\repy$. On the other hand, it is not $\repc\seqv\repy$,
as in $\repy$ one can compute $f(n)$ simply as
\begin{equation}
 \repy(n) \mapsto 
 \begin{cases}
   \bij{\repc(0), \repc(f(0))} & \mbox{if $\pi_2(\repy(n))=\repc(0)$} \\
   \bij{\repc(1), \repc(f(1))} & \mbox{if $\pi_2(\repy(n))=\repc(1)$}
 \end{cases}
\end{equation}
while in $\repc$ this is not possible (as proved by the undecidability
of the halting problem).

\separate

The previous example shows that the hierarchy established by the
relation $\wleq$ is not trivial.

\example
As we have already mentioned, this paper is framed mostly in the
context of the Turing model of computation, that is, using Turing
machines. However, the concept of representations and the issues that
derive from it are not limited to Turing Machines. Here we give an
example that hints at the generality of representation by discussing
briefly the halting problem in the context of Church's
$\lambda$-calculus.

While the Turing model has two elements (namely, the machine, which
implements the computing algorithm and the tape, which contains the
input and output data), $\lambda$-calculus defines a single concept:
the $\lambda$-funtion. Data must be represented as $\lambda$-functions
on which other $\lambda$-functions operate. Computation is done
through function application, and it stops (\emph{if} it stops) when
the expression is reduced to a normal form (see \cite{barendregt:84} for
details). An expression that cannot be reduced to normal form is in a
sense the equivalent of a Turing machine that never stops. The halting
problem in the context of $\lambda$-calculus can therefore be
expressed as follows:

\begin{quote}
  \emph{Is there a $\lambda$-function $H$ (the halt detector) such
    that, given any $\lambda$-function $L$ (algorithm) and another
    $\lambda$-function $I$ (input), $H$ detects whether $L$, when
    given $I$ as input, has a normal form (halts)?}
\end{quote}

Here, \dqt{detecting} means that $H$ gives two possible results, one
being the 'yes' answer (a $\lambda$-function which we will label
$\nu$) and the other the 'no' answer ($o$). The function $H$ must work
on two parameters, namely $L$ and $I$. These two parameters will be
expressed by considering the $\lambda$-function $\lambda{a.aLI}$ as
the \emph{representation} of the tuple $(L,I)$, which is then
evaluated by feeding it the function which must be evaluated on $L$
and $I$. That is, for any lambda function $F$,
$F(L)(I)=(\lambda{a.aLI})(F)$.

Under this representation, the halting problem is known to be
undecidable, a fact derived from the equivalence between
$\lambda$-calculus and Turing machines. Consider, however, an
alternative representation.

We shall represent the tuple $(L,I)$ as the lambda
function $\lambda{a.aLIh}$, where $h$ is $\nu$ if $L$ evaluated on
$I$ has normal form, and $o$ if it does not. This is, of course, the
same 'trick' used in the Turing machine formalism. 

In this representation, $H$ can easily be computed with the function
\[
\lambda m. m (\lambda x. \lambda y. \lambda z. z). 
\]

In particular,
\[
\begin{aligned}
  (\lambda m. m (\lambda x. \lambda y. \lambda z. z)) (\lambda a. a L I h) &= (\lambda a. a L I h) (\lambda x. \lambda y. \lambda z. z) \\
  & = (\lambda x. \lambda y. \lambda z. z) (L) (I) (h)\\
  & = h.
\end{aligned}
\]

\separate

The examples that we have presented so far show a common pattern: a
representation is improved not by a \emph{semantic} change that uses
additional properties of the abstract set hitherto not used, but by a
change internal to the space of representations that can be carried
out independently of the abstract set that we are representing (adding
delimiting characters, including the result of certain functions,
etc.). This pattern is general, and reveals a fundamental property of
representation improvement.

Let $\repa$, $\repa'$ be two representations of $A$. Let $f\in[A \rightarrow A]$ and
$\frep{f}{\repa}$, $\frep{f}{\repa'}$ be the two corresponding representations
of $f$. We then have
\[
\xymatrix{
  \repa'(A) \ar[r]^{\frep{f}{\repa'}} & \repa'(A) \\
  A \ar[u]^{\repa'} \ar[d]_{\repa} \ar[r]^{f} & A \ar[u]^{\repa'} \ar[d]_{\repa} \\
  \repa(A) \ar[r]^{\frep{f}{\repa}} & \repa(A) 
}
\]
The composition $\repx=\repa'\circ\repa^{-1}$ is an endorepresentation: a representation
of $\tpset$ into $\tpset$: $\frep{f}{\repx}$ is a representation
of $\frep{f}{\repa}$%
\footnote{Formally $\frep{f}{\repx}=\frep{f}{\repa'}$, a
  function in $a'(A)$. Nevertheless, we use two different
  symbols because the interpretation of the two functions is not the
  same: $\frep{f}{\repa'}$ is a representation of $f$, while
  $\frep{f}{\repx}$ is a representation of $\frep{f}{\repa}$.}%
. Note that if $f\in\ocomp{\repa}$ and $f\in\ocomp{\repa'}$, then
$\frep{f}{\repa}\in\ocomp{\repx}$.  The reverse is also
true: if $f\in\ocomp{\repa}$ and
$\frep{f}{\repa}\in\ocomp{\repx}$, then
$f\in\ocomp{\repa'}$.\\

\begin{theorem}
Let $\repx$ and $\repy$ be two representations of a set $A$. Then,
$\repx$ is computationally stronger than $\repy$ if and only if $\repx
\circ \repy^{-1} \in [\repy(A) \rightarrow \repx(A)]$, as an
endorepresentation of $\repy(A) \subset \tpset$ is computationally
stronger than the trivial representation of $\repy(A)$ (the identity
function $id_{\repy(A)}$).
\end{theorem}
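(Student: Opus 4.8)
The plan is to reduce the statement to transporting the whole situation along the bijection $\repy$. Set $C:=\repy(A)\subseteq\tpset$ and $\repu:=\repx\circ\repy^{-1}$; then $\repu$ is injective (a composite of injections), it is a representation of the abstract set $C$ with image $\repx(A)$, and the trivial representation of $C$ is $id_C=id_{\repy(A)}$. First I would record the routine bookkeeping that $f\mapsto\frep{f}{\repy}=\repy\circ f\circ\repy^{-1}$ is a bijection from $[A\rightarrow A]$ onto $[C\rightarrow C]$, with inverse $g\mapsto\repy^{-1}\circ g\circ\repy$; this uses only that $\repy$ carries $A$ bijectively onto $C$, so domains and ranges transport correctly.

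The crux is the single identity: for every $f\in[A\rightarrow A]$,
\begin{align*}
\frep{f}{\repx} &= \repx\circ f\circ\repx^{-1} \\
&= (\repx\circ\repy^{-1})\circ(\repy\circ f\circ\repy^{-1})\circ(\repy\circ\repx^{-1}) \\
&= \repu\circ\frep{f}{\repy}\circ\repu^{-1},
\end{align*}
where $\repu^{-1}=\repy\circ\repx^{-1}$ on $\repx(A)$; in words, $\frep{f}{\repx}$ is precisely the representation of the tape function $\frep{f}{\repy}$ under the endorepresentation $\repu$, and both functions have domain and range $\repx(A)$. Hence, directly from the definition of computability in a pair of representations, $f\in\ocomp{\repx}$ if and only if $\frep{f}{\repy}\in\ocomp{\repu}$. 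Dually, $f\in\ocomp{\repy}$ if and only if the tape function $\frep{f}{\repy}$ is computable, which is the same as $\frep{f}{\repy}\in\ocomp{id_C}$ since representing a function under the identity leaves it unchanged. (The single-function half of this is already noted, informally, in the paragraph preceding the statement.)

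Combining the two equivalences with the bijection $f\leftrightarrow\frep{f}{\repy}$ then closes the loop: $\repx$ is computationally stronger than $\repy$, i.e. $\ocomp{\repy}\subseteq\ocomp{\repx}$, precisely when for every $f\in[A\rightarrow A]$ one has $\frep{f}{\repy}\in\ocomp{id_C}\Rightarrow\frep{f}{\repy}\in\ocomp{\repu}$; and since $\frep{f}{\repy}$ ranges over all of $[C\rightarrow C]$, this is exactly $\ocomp{id_C}\subseteq\ocomp{\repu}$, i.e. $\repu$ is computationally stronger than $id_C$. I do not anticipate a genuine obstacle; the only point requiring care is the composition identity above — one must check that the partial functions are composed on matching domains, so that $\repu^{-1}$ is literally $\repy\circ\repx^{-1}$ and the end-to-end composite equals $\frep{f}{\repx}$ as a function rather than merely on a proper subset of $\repx(A)$. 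The rest is transport of structure along a bijection.
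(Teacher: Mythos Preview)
Your proposal is correct and follows essentially the same approach as the paper: both arguments rest on the algebraic identity $\frep{f}{\repx}=(\repx\circ\repy^{-1})\circ\frep{f}{\repy}\circ(\repy\circ\repx^{-1})$ and the observation that $f\mapsto\frep{f}{\repy}$ bijects functions on $A$ with functions on $\repy(A)$. The paper unfolds this identity separately in each direction, whereas you state it once and invoke it symmetrically; the content is the same. One small notational point: the paper works with $[A\rightarrow A)$ rather than $[A\rightarrow A]$, but this does not affect the argument.
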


\begin{proof}
Let $\repy \sleq \repx$. Let $f \in [\repy(A) \rightarrow \repy(A))$ such that $f \in \ocomp{id_{\repy(A)}} \subset \comp$. Consider the function $g = \repy^{-1} \circ f \circ \repy \in [A \rightarrow A)$, $f = g_\repy \in \comp$. Therefore, $g \in \ocomp{\repy}$. Since $\repy \sleq \repx$, then $g \in \ocomp{\repx}$, that is, $g_\repx = \repx \circ g \circ \repx^{-1} = \repx \circ \repy^{-1} \circ f \circ \repy \circ \repx^{-1} = f_{\repx \circ \repy^{-1}} \in \comp$, and $f \in \ocomp{\repx \circ \repy^{-1}}$.\\

Now let $id_{\repy(A)} \sleq \repx \circ \repy^{-1}$, and let $f \in [A \rightarrow A)$ such that $f \in \ocomp{\repy}$. Consider $g = f_\repy \in \comp \cap [\repy(A) \rightarrow \repy(A))$. Thus, $g \in \ocomp{id_{\repy(A)}}$, and since $id_{\repy(A)} \sleq \repx \circ \repy^{-1}$, then $g \in \ocomp{\repx \circ \repy^{-1}}$. Then, $g_{\repx \circ \repy^{-1}} = \repx \circ \repy^{-1} \circ g \circ \repy \circ \repx^{-1} = \repx \circ \repy^{-1} \circ \repy \circ f \circ \repy^{-1} \circ \repy \circ \repx^{-1} = \repx \circ f \circ \repx^{-1} = f_\repx \in \comp$, and $f \in \ocomp{\repx}$.

\end{proof}

\begin{corollary}
Let $\repx$ and $\repy$ be two representations of a set $A$. Then,
$\repx$ and $\repy$ are computationally equivalent if and only if
$\repx \circ \repy^{-1}$ and $\repy \circ \repx^{-1}$ are
computationally equivalent to $id_{\repy(A)}$ and $id_{\repx(A)}$ respectively.
\end{corollary}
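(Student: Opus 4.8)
The plan is to derive this from the preceding theorem by invoking it twice and adding one short auxiliary observation. I would write $\repu=\repx\circ\repy^{-1}\in[\repy(A)\rightarrow\repx(A)]$ for the endorepresentation of $\repy(A)$ induced by the change from $\repy$ to $\repx$, so that its inverse $\repu^{-1}=\repy\circ\repx^{-1}\in[\repx(A)\rightarrow\repy(A)]$ is the corresponding endorepresentation of $\repx(A)$. Applying the preceding theorem to the pair $(\repx,\repy)$ gives $\repy\sleq\repx$ if and only if $id_{\repy(A)}\sleq\repu$; applying it again with the roles of $\repx$ and $\repy$ interchanged gives $\repx\sleq\repy$ if and only if $id_{\repx(A)}\sleq\repu^{-1}$. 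Hence $\repx\weqv\repy$ is equivalent to the conjunction of $id_{\repy(A)}\sleq\repu$ and $id_{\repx(A)}\sleq\repu^{-1}$, and this already settles the \emph{if} direction, because each of the two asserted computational equivalences contains, in particular, the corresponding $\sleq$ relation.

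For the \emph{only if} direction I would establish the auxiliary fact that $id_{\repy(A)}\sleq\repu$ implies $\repu^{-1}\sleq id_{\repx(A)}$, and, symmetrically, that $id_{\repx(A)}\sleq\repu^{-1}$ implies $\repu\sleq id_{\repy(A)}$. To obtain the first, take any $g\in\ocomp{\repu^{-1}}$, that is, a function $g$ from $\repx(A)$ to itself with $\repu^{-1}\circ g\circ\repu\in\comp$; the composite $k:=\repu^{-1}\circ g\circ\repu$ is then a computable function from $\repy(A)$ to itself, so $k\in\ocomp{id_{\repy(A)}}$, and from $id_{\repy(A)}\sleq\repu$ we get $k\in\ocomp{\repu}$, that is, $\repu\circ k\circ\repu^{-1}\in\comp$. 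But $\repu\circ k\circ\repu^{-1}=(\repu\circ\repu^{-1})\circ g\circ(\repu\circ\repu^{-1})=g$, so $g\in\comp=\ocomp{id_{\repx(A)}}$, which is exactly $\repu^{-1}\sleq id_{\repx(A)}$; the symmetric statement follows by exchanging $\repu$ with $\repu^{-1}$ and $\repx(A)$ with $\repy(A)$. Granting $\repx\weqv\repy$, both $id_{\repy(A)}\sleq\repu$ and $id_{\repx(A)}\sleq\repu^{-1}$ hold by the first paragraph, so the auxiliary fact yields also $\repu\sleq id_{\repy(A)}$ and $\repu^{-1}\sleq id_{\repx(A)}$; combining the two halves gives $\repu\weqv id_{\repy(A)}$ and $\repu^{-1}\weqv id_{\repx(A)}$, which is the assertion.

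The one place where care is needed is the bookkeeping of which set each representation is a representation of: $\repu$ is a bijection of $\repy(A)$ onto $\repx(A)$, not an endofunction of a single fixed set, so $\repu^{-1}$ is a representation of $\repx(A)$ and must be compared against $id_{\repx(A)}$, and the second application of the theorem really does require interchanging $\repx$ and $\repy$. Once the domains and codomains are tracked correctly, the only computational content is the cancellation identity $\repu\circ(\repu^{-1}\circ g\circ\repu)\circ\repu^{-1}=g$, so I do not anticipate any genuine obstacle beyond this.
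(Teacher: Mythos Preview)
Your proposal is correct and follows essentially the same route as the paper: both proofs invoke the preceding theorem twice (once for each direction of $\sleq$) and then obtain the remaining reverse inclusions by the same cancellation argument, namely that if $k=\repu^{-1}\circ g\circ\repu$ is computable and $id_{\repy(A)}\sleq\repu$, then $\repu\circ k\circ\repu^{-1}=g$ is computable. Your write-up is in fact a bit cleaner than the paper's in its explicit tracking of which set each endorepresentation represents, but the mathematical content is identical.
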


\begin{proof}

$\repx$ and $\repy$ are computationally equivalent if and only if
  $\repx \sleq \repy$ and $\repy \sleq \repx$.

$\repx \sleq \repy$ if and only if $id_{\repx(A)} \sleq \repy \circ
  \repx^{-1}$ and $\repy \sleq \repx$ if and only if $id_{\repy(A)}
  \sleq \repx \circ \repy^{-1}$.

On the other hand, let $f \in [\repy(A) \rightarrow \repy(A))$, $f \in
  \ocomp{\repx \circ \repy^{-1}}$ if and only if $f_{\repx \circ
    \repy^{-1}} = \repx \circ \repy^{-1} \circ f \circ \repy \circ
  \repx^{-1} \in \comp$. Because $f_{\repx \circ \repy^{-1}} \in
       [\repx \circ \repy^{-1} (A) \rightarrow \repx \circ \repy^{-1}
         (A)) \subset [\repx(A) \rightarrow \repx(A))$, then this
           implies that $f_{\repx \circ \repy^{-1}} \in
           \ocomp{id_{\repx(A)}}$. Now, $\repx \sleq \repy$ if and
           only if $id_{\repx(A)} \sleq \repy \circ \repx^{-1}$, and
           then $f_{\repx \circ \repy^{-1}} \in \ocomp{\repy \circ
             \repx^{-1}}$, if and only if $(f_{\repx \circ
             \repy^{-1}})_{\repy \circ \repx^{-1}} = \repy \circ
           \repx^{-1} \circ f_{\repx \circ \repy^{-1}} \circ \repx
           \circ \repy^{-1} = \repy \circ \repx^{-1} \circ \repx \circ
           \repy^{-1} \circ f \circ \repy \circ \repx^{-1} \circ \repx
           \circ \repy^{-1} = f \in \comp$, and thus $f \in
           \ocomp{id_{\repy(A)}}$.\\

Symmetrically prove that for every $f \in [\repx(A) \rightarrow
  \repx(A))$, $f \in \ocomp{\repy \circ \repx^{-1}}$ implies $f \in
  \ocomp{id_{\repx(A)}}$.
\end{proof}

According to these theorems improvements in representation of any
abstract set come through changes internal to the tape representation,
a change in the way the tapes are written. Conversely, every change
that represents an improvement in the space of representation
corresponds to an improvement in the representation of the abstract
set.  This theorem allows us to push all considerations about
representation improvement to within the space of tapes,
independently of the properties of the set that we are representing.

\remark
It is important to note that all relations between representations are
relative to their use \emph{qua} representations of a specific
set. So, when we state that $id_{\repy(A)}\sleq\repx\circ\repy^{-1}$,
the relation holds only when the representations are intended as
representations of the set $\repy(A)$. If we consider them as
representation of a different set (for example of a set
$B\subset{A}$), the relation may fail to hold.

For example, in lemma \ref{finitecomputability} we proved that all
representations of finite sets are computationally equivalent. This,
and the fact that there exist endorepresentations of $\tpset$ which
are not computationally equivalent to the identity would seem to lead
us to a contradiction. We could, the argument would go, strictly
improve a representation of a finite set by using this improvement in
the set of tapes. However, these endorepresentations of $\tpset$ which
are not computationally equivalent to the identity as representations
of $\tpset$ are computationally equivalent to the identity as
representations of the finite set.

\separate

\section{Representations and the Turing hierarchy}

\subsection{Oracles}
Given a set of tapes $Q$, an \emph{oracle} for $Q$ is the
characteristic function for $Q$. A TM with oracle $Q$, $\tmach_e^Q$ is
a TM with one additional operation: given a tape $\tape$, the
operation produces $\nu$ if $\tape\in{Q}$, and $o$ if
$\tape\not\in{Q}$. Note that, as in the standard definition, it is
easy to implement $Q$ as an additional infinite tape that in the
position $\#\tape$ has a \dqt{1} if $\tape\in{Q}$ and a \dqt{0}
otherwise.

Given a set $A$, a representation $\repx$ of $A$, and a set
$N\subseteq{A}$, an $\repx$-oracle for $N$ is a function that, given
the tape $\repx(n)$, produces $\nu$ if $n\in{N}$, and $o$ otherwise;
that is, it is an oracle for $\repx(N)$ inside $\repx(A)$. The TM
$\phi_e^N$ with oracle $N$ is defined in the obvious way.

Given a set $Q\subset\tpset$ we shall indicate with $\comp^Q$ the set
of functions on tapes computable using an oracle for $Q$. Analogously,
given a representation $\repa$ of $A$, and a set $N \subseteq A$, we
shall indicate with $\comp^{N}(\repa)$ the set of functions in
$[A_1\rightarrow{A_2}]$ for some $A_1,A_2\subset{A}$ which are
computable with an $\repx$-oracle for $N$.

\begin{definition} 
  Given two sets $B,C\subseteq{A}$ and a representation $\repx$ of
  $A$, $B$ is \emph{$\repx$-computable in C} if $\chi_B \in
  \comp^{C}(\repx)$.
\end{definition}

Consider a representation $\repx$ of a set $A$ such that $\repx(A)$ is
computationally enumerable. Then, any set $C\subseteq{A}$ is trivially
computable using an $\repx$-oracle for $C$, and as such, it is
computationally enumerable using an $\repx$-oracle for $C$. Use this
enumerator to provide a standard numbering ${\#_\repx^C}$ of $C$. Note
that even if $C$ is always computable using an $\repx$-oracle for $C$,
it need not be computationally enumerable if $\repx(A)$ is not
computationally enumerable.

\begin{definition}
  Let $A, C$ be two infinite countable sets with $C \subset A$. Let $\repx$ be a
  computationally enumerable representation of $A$. The \emph{Turing jump} of $C$ in the
  representation $\repx$ is
  \begin{equation}
    C_\repx^{\prime} = \left\{ c \in C | {\tmach^C}_{{\#_\repx^C}(c)}(\repx(c))\tmstop\right\}
  \end{equation}
\end{definition}

The definition, per se, would not require that $C$ be countable nor
infinite, but the infinite and countable case is the one for
which all the interesting results are derived so, rather than adding
the restriction to the discussions that follow, we have preferred to
make it explicit in the definition.

\begin{theorem}
  $C_\repx^{\prime}$ is not $\repx$-computable in $C$.
\end{theorem}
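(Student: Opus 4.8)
The plan is to mimic the classical diagonalization proof of the undecidability of the halting problem, carried out this time inside the representation $\repx$ and relative to the oracle $C$. Suppose, for contradiction, that $\chi_{C'_\repx} \in \comp^C(\repx)$, i.e.\ there is a TM $\tmach_k^C$ with $\repx$-oracle for $C$ that, on input $\repx(c)$ for $c\in C$, halts and outputs $\nu$ exactly when $c\in C'_\repx$ and $o$ otherwise. Using $\tmach_k^C$ together with the enumerator of $C$ that comes from $\repx(A)$ being c.e.\ (this is where computational enumerability of $\repx$ is essential, since it provides both the numbering $\#_\repx^C$ and the ability to move effectively between $\repx(c)$ and its index), I would build a machine $\tmach_j^C$ that on input $\repx(c)$ first computes $\#_\repx^C(c)$, then simulates $\tmach_k^C$ on $\repx(c)$, and finally \emph{inverts} the answer: if $\tmach_k^C(\repx(c)) = o$ it halts (say outputting $\nu$), and if $\tmach_k^C(\repx(c)) = \nu$ it deliberately enters a non-halting loop.

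Now let $c_0 \in C$ be the element whose $\repx$-index under $\#_\repx^C$ is $j$, that is $\#_\repx^C(c_0) = j$ (such a $c_0$ exists because $\#_\repx^C$ is a numbering of $C$ and, as $C$ is infinite, every index is used). Unwinding the definitions, $\tmach_j^C(\repx(c_0))$ halts if and only if $\tmach_k^C(\repx(c_0)) = o$, i.e.\ if and only if $c_0 \notin C'_\repx$; but by the definition of the Turing jump, $c_0 \in C'_\repx$ precisely when ${\tmach^C}_{\#_\repx^C(c_0)}(\repx(c_0)) = {\tmach^C}_j(\repx(c_0))\tmstop$. Putting these together, $\tmach_j^C(\repx(c_0))\tmstop \iff c_0\notin C'_\repx \iff \tmach_j^C(\repx(c_0))\tmnostop$, a contradiction. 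Hence no such $\tmach_k^C$ exists and $C'_\repx$ is not $\repx$-computable in $C$.

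The main obstacle is purely bookkeeping rather than conceptual: one must check that the enumeration of Turing machines with oracle $C$ has the universality property needed to "run $\tmach_{\#_\repx^C(c)}^C$" effectively from the tape $\repx(c)$, and that the standard numbering $\#_\repx^C$ built from the c.e.\ enumerator of $\repx(A)$ behaves well under these manipulations (in particular that the diagonal index $j$ of the machine we construct is itself in the range of $\#_\repx^C$, which holds because $C$ is infinite). All of these are the relativized, representation-indexed analogues of facts already invoked in the proof of Theorem \ref{knotcomp}, and they transfer without difficulty; the essential new ingredient, as stressed above, is that the hypothesis that $\repx$ is computationally enumerable is exactly what makes the indexing and the universal simulation available in this setting.
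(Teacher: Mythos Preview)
Your proof is correct, but it follows a different (though equally classical) diagonalization pattern from the paper's. The paper mirrors its earlier Theorem~\ref{knotcomp} verbatim: assuming $\chi_{C'_\repx}$ is $\repx$-computable in $C$, it builds a \emph{total} function $f^C\in\comp^C(\repx)$ by setting $f^C_\repx(\repx(c))$ to $E^C_\repx\bigl(\phi^C_{\#_\repx^C(c)}(\repx(c))\bigr)$ (the $C$-successor of the diagonal output) when the diagonal machine halts, and to a fixed tape $o$ otherwise; this $f^C$ is then realized by some $\phi^C_{\#_\repx^C(c_0)}$, and evaluating at $c_0$ gives $f^C_\repx(\repx(c_0))\ne\phi^C_{\#_\repx^C(c_0)}(\repx(c_0))$. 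You instead flip halting behaviour directly and obtain $c_0\in C'_\repx\iff c_0\notin C'_\repx$. Your version is a bit leaner, since the constructed machine need not invoke the enumerator $E^C$ at all (in fact the step ``first computes $\#_\repx^C(c)$'' in your description is superfluous and can simply be omitted); the paper's version, on the other hand, keeps the argument uniform with Theorem~\ref{knotcomp} and foregrounds the enumerator $E^C$, which fits the paper's running emphasis on enumerability as the key structural hypothesis.
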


\begin{proof}
  Let $E^C \in [C \rightarrow C]$ be the enumerator of $C$. Suppose
  $C_\repx^{\prime}$ has a computable characteristic function with
  $\repx$-oracle $C$, ${\chi_{C_\repx^{\prime}}}^C$; define
  $f^C\in[C\rightarrow{C}]$ such that
  \begin{equation}
    {f^C}_\repx(\repx(c)) = 
    \begin{cases}
      {E^C}_\repx({\phi^C}_{{\#_\repx}^C(c)}(\repx(c))) & \mbox{if ${\chi_{C_\repx^{\prime}}}^C(c)=\nu$} \\
      o                       & \mbox{if ${\chi_{C_\repx^{\prime}}}^C(c)=o$}
    \end{cases}
  \end{equation}
  then $f^C\in\comp^C(\repx)$. Thus, we know that there exists
  $c_0\in{C}$ such that for all $c\in{C}$,
  ${\phi^C}_{{\#_\repx}^C(c_0)}(\repx(c))\tmstop$ and
  ${f^C}_\repx(\repx(c))={\phi^C}_{{\#_\repx}^C(c_0)}(\repx(c))$ but,
  for all $c$ such that ${\tmach^C}_{{\#_\repx}^C
    (c)}(\repx(c))\tmstop$,
  ${f^C}_\repx(\repx(c))={E^C}_\repx({\phi^C}_{{\#_\repx}^C
    (c)}(\repx(c)))\ne{\phi^C}_{{\#_\repx}^C(c)}(\repx(c))$.  In
  particular, ${\tmach^C}_{{\#_\repx}^C(c_0)}(\repx(c_0)) \tmstop$, and
  so
  ${f^C}_\repx(\repx(c_0))\ne{\phi^C}_{{\#_\repx}^C(c_0)}(\repx(c_0))$.
  This contradiction proves that $C_\repx^{\prime}$ must be not
  $\repx$-computable in $C$.
\end{proof}

\begin{definition}
  We shall say that $C\le_\repx{B}$ if $C$ is $\repx$-computable in $B$, and that 
  $C\equiv_\repx{B}$ if $C\le_\repx{B}$ and $B\le_\repx{C}$. We define
  \begin{equation}
    \tdg{\repx}{C} = [C]_{\equiv_\repx} = \tset{B | B \equiv_\repx C}
  \end{equation}
  also, we set
  \begin{equation}
    \emptyset_\repx^{(n)} = \tset{a | \tmach_{\#\repx(a)}^{\emptyset_\repx^{(n-1)}}(\repx(a))\tmstop}
  \end{equation}
  where $\emptyset^{(0)}=\emptyset$ is a computable set. Also, set
  \begin{equation}
    \tdgind{0}_\repx^{(n)} = \tdg{\repx}{\emptyset_\repx^{(n)}}
  \end{equation}
\end{definition}

In the following, we shall indicate with $\repc$ the standard
representation of ${\mathbb{N}}$, that in which the number $n$ is
represented as $n+1$ symbols \dqt{1} followed by one \dqt{0}. Then
$\tdgind{0}_\repc=\tdgind{0}$, the degree of standard computable sets,
and
\begin{equation}
  \tdgind{0}_\repc \le_\repc \tdgind{0}_\repc' \le_\repc \tdgind{0}_\repc'' \le_\repc \cdots
  \le_\repc \tdgind{0}_\repc^{(n)} \le_\repc \cdots
\end{equation}
is the standard Turing hierarchy. On the other hand, each
representation $\repx$ induces a hierarchy
\begin{equation}
  \tdgind{0}_\repx \le_\repx \tdgind{0}_\repx' \le_\repx \tdgind{0}_\repx'' \le_\repx \cdots
  \le_\repx \tdgind{0}_\repx^{(n)} \le_\repx \cdots
\end{equation}

\subsection{Representation hierarchy}
Representations and their hierarchies have a connection
with the Turing hierarchy. Let $\chi_\repx^{(n)}$ be the
characteristic function of the set $\emptyset_\repx^{(n)}$, and
$\repc$ the standard representation. Define the class of
representations $\repu^{(k)}:{\mathbb{N}}\rightarrow\tpset$ as
\begin{equation}
  \repu^{(k)}(n) = \bij{\repc(\chi_\repc^{(k)}(n)),\ldots,\repc(\chi_\repc'(n)),\repc(n)}
\end{equation}
We have $\repu^{(k-1)}=\pi_2\circ\repu^{(k)}$, therefore
$\repu^{(k-1)}\wleq\repu^{(k)}$; on the other hand, $\chi_\repc^{(k)}$
is computable in $\repu^{(k)}$, but not in $\repu^{(k-1)}$, so
$\repu^{(k)}\not\seqv\repu^{(k-1)}$. We can consider the equivalence
classes $[\repu^{(k)}]_{\seqv}$. Clearly, if $\repx\seqv\repy$ it is
$\ocomp{\repx}=\ocomp{\repy}$, so the set
$\ocomp{[\repu^{(k)}]_{\seqv}}$ is well defined, and within it are the
set of characteristic functions of sets which are computable by the
class of representations transformationally equivalent to
$\repu^{(k)}$. We call this the \emph{representation degree} of
$\repu^{(k)}$:
\begin{equation}
  \tdgind{\repu^{(k)}} = \rdg{\repu^{(k)}} = \ocomp{[\repu^{(k)}]_{\seqv}}
\end{equation}

\begin{lemma}
  \begin{equation}
    \tdgind{\repu^{(k)}} \subseteq \tdgind{0}^{(k)}
  \end{equation}
\end{lemma}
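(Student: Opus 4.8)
The claim is that $\tdgind{\repu^{(k)}} \subseteq \tdgind{0}^{(k)}$, i.e.\ that every characteristic function of a set computable by a representation transformationally equivalent to $\repu^{(k)}$ is already at or below the $k$-th level of the standard Turing hierarchy. The plan is to show that $\ocomp{\repu^{(k)}}$ is contained in the set of $\emptyset_\repc^{(k)}$-computable functions (suitably interpreted as functions on $\repc(\mathbb{N})$ via the standard representation), and then observe that this containment is preserved under passing to transformationally equivalent representations, since $\repx \seqv \repy$ gives $\ocomp{\repx} = \ocomp{\repy}$.

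First I would fix $f \in \ocomp{\repu^{(k)}}$, say $f \in [A_1 \rightarrow A_2]$ with $A_1, A_2 \subseteq \mathbb{N}$, so that $f_{\repu^{(k)}} = \repu^{(k)} \circ f \circ (\repu^{(k)})^{-1}$ is computable by some TM $\phi_e$. I would then build a TM with oracle $\emptyset_\repc^{(k)}$ that computes $f_\repc = \repc \circ f \circ \repc^{-1}$: on input $\repc(n)$, it must first reconstruct the tape $\repu^{(k)}(n) = \bij{\repc(\chi_\repc^{(k)}(n)), \ldots, \repc(\chi_\repc'(n)), \repc(n)}$. The key point is that each component $\chi_\repc^{(j)}(n)$ for $j \le k$ is computable with an oracle for $\emptyset_\repc^{(k)}$ — this is the standard fact that $\emptyset^{(j)} \le_T \emptyset^{(k)}$ for $j \le k$, which I would invoke (it follows from the definition of the jumps $\emptyset_\repc^{(n)}$ given earlier and the transitivity/monotonicity of oracle reducibility). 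Having assembled $\repu^{(k)}(n)$ using the computable tupling bijection $\bij{\_,\_}$ and its associated operations, the machine runs $\phi_e$ to obtain $f_{\repu^{(k)}}(\repu^{(k)}(n)) = \repu^{(k)}(f(n))$, then applies the iterated second projection $\pi_2$ (computable by hypothesis on the bijection) to extract $\repc(f(n))$. This shows $f_\repc \in \comp^{\emptyset_\repc^{(k)}}$, hence in particular if $f = \chi_B$ for some $B \subseteq \mathbb{N}$ then $B$ is $\repc$-computable in $\emptyset_\repc^{(k)}$, i.e.\ $\chi_B \in \tdgind{0}^{(k)}$.

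Finally I would close the argument by noting that $\tdgind{\repu^{(k)}} = \ocomp{[\repu^{(k)}]_{\seqv}}$, and that for any $\repx \seqv \repu^{(k)}$ we have $\ocomp{\repx} = \ocomp{\repu^{(k)}}$, so the computation above already covers every function in the representation degree; there is nothing extra to check for the equivalence class.

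The main obstacle I expect is the bookkeeping in the first half: carefully justifying that the oracle $\emptyset_\repc^{(k)}$ suffices to compute all the lower-level characteristic functions $\chi_\repc^{(j)}$ simultaneously, and that the reassembly of the tuple $\repu^{(k)}(n)$ and the final extraction of $\repc(f(n))$ are genuinely carried out by the elementary computable operations on the tape bijection that the paper has postulated. The conceptual content is light — it is essentially ``a representation that bundles in the first $k$ jumps gives you no more than a $\emptyset^{(k)}$ oracle'' — but making the TM-with-oracle construction precise while staying inside the operations declared available is the delicate part. Note also that the lemma gives only containment, not equality: the reverse inclusion fails in general because a $\emptyset^{(k)}$-oracle machine can do things (like enumerating non-c.e.\ sets relative to $\emptyset^{(k)}$) that the fixed representation $\repu^{(k)}$, whose range is not c.e., does not directly afford — so I would not attempt to prove more than the stated inclusion.
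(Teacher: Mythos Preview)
Your proposal is correct and follows essentially the same approach as the paper: reconstruct $\repu^{(k)}(n)$ from $\repc(n)$ using an oracle for $\emptyset^{(k)}$ (since all lower jumps $\chi_\repc^{(j)}$, $j\le k$, are $\emptyset^{(k)}$-computable by transitivity of $\le_\repc$), then apply the TM that computes $f$ in the representation $\repu^{(k)}$. Your version is in fact slightly more careful than the paper's, which omits the final extraction step (your iterated $\pi_2$ to recover $\repc(f(n))$ from $\repu^{(k)}(f(n))$) and does not explicitly address the passage to the $\seqv$-equivalence class; both additions are appropriate and do not alter the argument's substance.
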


\begin{proof}
  Let $f\in\tdgind{\repu^{(k)}}$; then there is a TM $\phi$ such that,
  for each $n\in{\mathbb{N}}$,
  $(\repu^{(k)}\circ{f})(n)=(\phi\circ\repu^{(k)})(n)$. Consider a
  tape with the representation $\repc(n)$. Since (trivially)
  $\chi^{(k)}_\repc \in\tdgind{0}^{(k)}$, there is a TM with oracle
  $\emptyset^{(k)}$ that can compute $\chi^{(k)}_\repc$ and, by the
  transitivity of the relation $\le_\repc$, there are TM with oracle $\emptyset^{(k)}$ that can
  compute $\emptyset^{\prime},\ldots,\emptyset^{(k-1)}$; the bijection
  $\bij{,}$ is also computable, therefore there is a TM $\overline{\phi^{\emptyset^{(k)}}}$ with oracle $\emptyset^{(k)}$ that,
  given $\repc(n)$ can compute $\repu^{(k)}(n)$. Applying $\overline{\phi^{\emptyset^{(k)}}}$
  followed by $\phi$ we can compute $f$ with oracle $\emptyset^{(k)}$.
\end{proof}

The following property derives trivially from the observation that
$\chi^{(k)}_\repc\in\tdgind{\repu^{(k)}}$, but $\chi^{(k)}_\repc \not\in\tdgind{0}^{(k-1)}$ 

\begin{lemma}
  \begin{equation}
    \tdgind{\repu^{(k)}} - \tdgind{0}^{(k-1)} \ne \emptyset
  \end{equation}
\end{lemma}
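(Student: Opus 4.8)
The plan is to exhibit an explicit element of the difference: the characteristic function $\chi_\repc^{(k)}$ of the set $\emptyset_\repc^{(k)}$. It suffices to check the two membership facts recorded before the statement, namely $\chi_\repc^{(k)}\in\tdgind{\repu^{(k)}}$ and $\chi_\repc^{(k)}\notin\tdgind{0}^{(k-1)}$.

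For the first, recall that $\repu^{(k)}(n)=\bij{\repc(\chi_\repc^{(k)}(n)),\ldots,\repc(n)}$, so the projection $\pi_1$ (computable, being one of the basic bijection operations) maps $\repu^{(k)}(n)$ to $\repc(\chi_\repc^{(k)}(n))$; since $\chi_\repc^{(k)}(n)\in\{0,1\}$ this tape is one of the two fixed tapes $\repc(0),\repc(1)$, which can be told apart by a machine recognizing equality against a constant tape, after which one outputs the corresponding hard-coded constant $\repu^{(k)}(0)$ or $\repu^{(k)}(1)$. Hence $\chi_\repc^{(k)}$ is computable in $\repu^{(k)}$ (the fact already noted in the text immediately after the definition of $\repu^{(k)}$), i.e. $\chi_\repc^{(k)}\in\ocomp{\repu^{(k)}}\subseteq\ocomp{[\repu^{(k)}]_{\seqv}}=\tdgind{\repu^{(k)}}$.

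For the second, I would show that $\emptyset_\repc^{(k)}$ is not computable from an oracle for $\emptyset_\repc^{(k-1)}$. Apply the theorem that $C_\repx^{\prime}$ is not $\repx$-computable in $C$ with $\repx=\repc$ and $C=\emptyset_\repc^{(k-1)}$ (an infinite countable set, with $\repc$ a computationally enumerable representation, so the hypotheses are met): this yields $\chi_{(\emptyset_\repc^{(k-1)})_\repc^{\prime}}\notin\comp^{\emptyset_\repc^{(k-1)}}(\repc)=\tdgind{0}^{(k-1)}$. It then remains to see that $\emptyset_\repc^{(k)}\equiv_\repc(\emptyset_\repc^{(k-1)})_\repc^{\prime}$, whence $\chi_\repc^{(k)}\notin\tdgind{0}^{(k-1)}$ as well. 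Combining the two facts, $\chi_\repc^{(k)}\in\tdgind{\repu^{(k)}}-\tdgind{0}^{(k-1)}$, so the difference is non-empty.

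The one genuinely non-trivial point — and the main obstacle — is this last identification $\emptyset_\repc^{(k)}\equiv_\repc(\emptyset_\repc^{(k-1)})_\repc^{\prime}$. The set $\emptyset_\repc^{(k)}$ is defined through the global numbering $a\mapsto\#\repc(a)$ ranging over all of $\mathbb{N}$, whereas the Turing jump $(\emptyset_\repc^{(k-1)})_\repc^{\prime}$ is defined through the oracle-enumerator numbering $\#_\repc^{C}$ ranging only over $c\in C=\emptyset_\repc^{(k-1)}$. One has to argue that, relative to the oracle $\emptyset_\repc^{(k-1)}$ (equivalently relative to either jump, since both lie $\le_\repc$-above $\emptyset_\repc^{(k-1)}$), restricting the index domain from $\mathbb{N}$ to $C$ and translating between the two numberings are degree-preserving operations — a routine $s$-$m$-$n$-style manipulation of the definitions. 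Everything else follows directly from the construction of $\repu^{(k)}$ and from results already established.
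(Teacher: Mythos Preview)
Your approach is exactly the paper's: the paper's entire proof is the single sentence ``derives trivially from the observation that $\chi^{(k)}_\repc\in\tdgind{\repu^{(k)}}$, but $\chi^{(k)}_\repc \not\in\tdgind{0}^{(k-1)}$'', exhibiting the same witness you chose. You supply far more detail than the paper, and you correctly flag a genuine subtlety --- the mismatch between the definition of $\emptyset_\repc^{(k)}$ (global numbering over $\mathbb{N}$) and the jump $(\emptyset_\repc^{(k-1)})_\repc'$ (oracle-enumerator numbering over $C$) --- that the paper simply passes over in silence.
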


So, there is a hierarchy of representations that in a sense mirrors
the Turing hierarchy. One question that comes naturally is whether
this hierarchy corresponds to an effective increase in computing
power. The following results will allow us to answer this question.

\begin{theorem}
  \label{Aisrc}
  Let $\repx$ and $\repy$ be two representations of
  a set $A$, with $\repx\sleq\repy$. Then, if
  $\repx(A)$ is c.e. so is $\repy(A)$.
\end{theorem}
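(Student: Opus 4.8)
The plan is to mine the hypothesis that $\repx(A)$ is c.e.\ for a concrete function on $A$ that is computable in $\repx$, transfer that function to the representation $\repy$ using computational strength, and then replay the enumeration inside $\repy(A)$. If $A=\emptyset$ then $\repy(A)=\emptyset$ is c.e., so I assume $A\ne\emptyset$.

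First I would pass to the c.e.-B form of the hypothesis via Theorem~\ref{rectheorem}: there is a partially computable enumerator $g\in[\repx(A)\rightarrow\repx(A)]$ and a tape $\tape_0\in\repx(A)$ such that every tape of $\repx(A)$ equals $g^{i}(\tape_0)$ for some $i\in\mathbb{N}$. Pulling this back along the bijection $\repx$, set $a_0=\repx^{-1}(\tape_0)\in A$ and $s=\repx^{-1}\circ g\circ\repx\in[A\rightarrow A]$; a routine induction gives $s^{i}(a_0)=\repx^{-1}(g^{i}(\tape_0))$, so $\{\,s^{i}(a_0)\mid i\in\mathbb{N}\,\}=A$. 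The representation of $s$ in $\repx$ is $\repx\circ s\circ\repx^{-1}=g$, which, being partially computable, is computable; hence $s\in\ocomp{\repx}$. Now the hypothesis $\repx\sleq\repy$, i.e.\ $\ocomp{\repx}\subseteq\ocomp{\repy}$, yields $s\in\ocomp{\repy}$, so $s_\repy=\repy\circ s\circ\repy^{-1}$ is computable, with domain $\repy(A)$ and range $\repy(A)$. Since $a_0\in A$, the single tape $\repy(a_0)\in\repy(A)$ can be produced by a machine, exactly as in the proof of Lemma~\ref{latest}; and, again by induction, $s_\repy^{\,i}(\repy(a_0))=\repy(s^{i}(a_0))$, so $\repy(A)=\{\,s_\repy^{\,i}(\repy(a_0))\mid i\in\mathbb{N}\,\}$, with every iterate lying inside $\repy(A)$.

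To finish I would exhibit a machine witnessing that $\repy(A)$ is c.e.-A (Definition~\ref{recena}): on input $\tape$, generate the tapes $\repy(a_0),\,s_\repy(\repy(a_0)),\,s_\repy^{2}(\repy(a_0)),\dots$ one at a time, at each stage testing the current tape against $\tape$ with the assumed equality machine (keeping a copy of $\tape$ via the tape-pairing operations), halting and outputting $\nu$ as soon as a match occurs, and otherwise running forever. If $\tape\in\repy(A)$ then $\tape=s_\repy^{\,i}(\repy(a_0))$ for some $i$ and the machine halts with $\nu$; if $\tape\notin\repy(A)$ then no iterate equals $\tape$ and the machine diverges, so by Theorem~\ref{rectheorem} the set $\repy(A)$ is c.e. (Alternatively, one may present $\repy(A)$ directly as the range of the computable function $\repc(n)\mapsto s_\repy^{\,n}(\repy(a_0))$ on the standard numbering of $\mathbb{N}$.) The one point to be careful about is that the machine computing $s_\repy$ is guaranteed to behave only on $\repy(A)$; this is harmless here because $\repy(a_0)\in\repy(A)$ and $s_\repy$ maps $\repy(A)$ into $\repy(A)$, so every tape fed to it during the search lies in $\repy(A)$ and the step halts with the correct value. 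I expect the real difficulty to be conceptual rather than technical: one must notice that the c.e.-B enumerator of $\repx(A)$ is a function \emph{on the abstract set $A$} which is computable in $\repx$ --- this is precisely why computational strength, and not transformability, is the hypothesis that lets the enumeration be carried over to $\repy$.
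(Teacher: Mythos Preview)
Your proof is correct and follows essentially the same route as the paper: pull the c.e.-B enumerator of $\repx(A)$ back along $\repx$ to an abstract enumerator $s\in[A\rightarrow A]$ computable in $\repx$ (the paper calls it $q$), transfer it to $\repy$ via $\repx\sleq\repy$, and observe that $s_\repy$ enumerates $\repy(A)$. The only difference is cosmetic: the paper stops once it has a computable enumerator of $\repy(A)$ (c.e.-B), whereas you add an explicit search-and-compare machine to witness c.e.-A --- a harmless elaboration, and arguably a tidier finish given that Definition~\ref{recenb} asks for partial computability rather than mere computability.
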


\begin{proof}
  Let $\repx_{|A}\in[A\rightarrow\repx(A)]$; note that $\repx$ is injective
  and therefore $\repx^{-1}$ exists in $\repx(A)$. The set $\repx(A)$
  is c.e., therefore there are $e'\in[\repx(A)\carrow\repx(A)]$
  and $\tape_0\in\repx(A)$ such that for each $\tape\in\repx(A)$ there
  is $k$ such that $\tape=e'^{k}(\tape_0)$.

  Consider now the function
  $q=\repx^{-1}\circ{e'}\circ\repx_{|A}\in[A\rightarrow{A}]$. It is obvious
  that given $u_0=\repx^{-1}(\tape_0)$ and $u\in{A}$ it is
  $u=q^k(u_0)$ where $k$ is the number such that
  $\repx(u)=e'^{k}(\tau_0)$. That is, $q$ is an enumerator of $A$ and $e' = \frep{q}{\repx}$.

  By a similar argument, it can be seen that
  $e=\repy\circ{q}\circ\repy^{-1} = \frep{q}{\repy}$ is an enumerator of $\repy(A)$ and,
  since $q$ is computable in $\repx$ and $\repx\sleq\repy$, $q$ is also
  computable in $\repy$, that is, $e$ is computable. Therefore $\repy(A)$ is not only c.e., but also with the same
  enumeration function (in $A$) as $\repx(A)$.
\end{proof}

\begin{theorem}
  Let $\repx$ and $\repy$ be two representations of
  a set $A$. If $\repx\sleq\repy$ and $\repx(A)$ is c.e. then
  $\repx\seqv\repy$
\end{theorem}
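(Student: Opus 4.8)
The plan is to reduce the statement to the (unnumbered) theorem proved just before the definition of computational strength, namely that $\repy\wleq\repx$ together with $\repx(A)$ c.e. already forces $\repy\seqv\repx$; since $\seqv$ is an equivalence relation this is the same as $\repx\seqv\repy$. So it suffices to establish the single transformability relation $\repy\wleq\repx$, i.e. that the canonical transformation $\repy\circ\repx^{-1}\in[\repx(A)\rightarrow\repy(A)]$ is computable. Note that $\repx\sleq\repy$ by itself cannot give this directly, because $\repy\circ\repx^{-1}$ is a map between two different representations rather than an endofunction in one of them; the c.e.\ hypothesis is what lets us transfer.

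For the construction I would reuse the objects built inside the proof of Theorem~\ref{Aisrc}. Applying that theorem to our hypotheses ($\repx\sleq\repy$, $\repx(A)$ c.e.) yields not only that $\repy(A)$ is c.e., but an ``abstract enumerator'' $q\in[A\rightarrow A]$ with base point $u_0\in A$ such that $e'=\frep{q}{\repx}$ is a computable enumerator of $\repx(A)$ with base $\tape_0=\repx(u_0)$ and $e=\frep{q}{\repy}$ is a computable enumerator of $\repy(A)$ with base $\repy(u_0)$. By induction on $k$, $(e')^{k}(\tape_0)=\repx(q^k(u_0))$ and $e^{k}(\repy(u_0))=\repy(q^k(u_0))$, so the two enumerations are index-compatible: the $k$-th tape produced on the $\repx$ side and the $k$-th tape produced on the $\repy$ side are images of the same element $q^k(u_0)$. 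Now define a TM $\phi$ as follows: on input $\tape$, iterate in lockstep the computable sequences $\tape_0,e'(\tape_0),(e')^2(\tape_0),\dots$ and $\repy(u_0),e(\repy(u_0)),e^2(\repy(u_0)),\dots$, using the comparator \emph{eq} at each step to test whether the current term of the first sequence equals $\tape$; at the first index $k$ where this holds, output the $k$-th term $e^{k}(\repy(u_0))$ of the second. If $\tape=\repx(u)\in\repx(A)$ it occurs in the first sequence (every element of $\repx(A)$ does, since $e'$ enumerates $\repx(A)$), so $\phi$ halts, and if $k$ is the first such index then $\repx(q^k(u_0))=\repx(u)$, hence $q^k(u_0)=u$ by injectivity of $\repx$, so $\phi(\tape)=e^{k}(\repy(u_0))=\repy(q^k(u_0))=\repy(u)=(\repy\circ\repx^{-1})(\tape)$. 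Thus $\repy\circ\repx^{-1}\in[\repx(A)\carrow\repy(A)]$, i.e. $\repy\wleq\repx$; combined with $\repx(A)$ c.e.\ and the earlier theorem, this gives $\repy\seqv\repx$, hence $\repx\seqv\repy$.

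The main obstacle is simply making the two-sided bookkeeping precise: one must check that the lockstep search halts on every $\tape\in\repx(A)$ (covered by $e'$ being a genuine enumerator), that the ``$k$-th term of the second sequence'' is unambiguous even when the enumeration has repetitions (covered by index-compatibility: any coincidence $(e')^{k}(\tape_0)=(e')^{j}(\tape_0)$ transports through $q$ and injectivity to the matching coincidence on the $\repy$ side), and that the primitive operations used — applying $e'$, applying $e$, and the test \emph{eq} — are all available, which is exactly what the standing assumptions on the tape bijection and on the existence of a comparator guarantee. Everything else is routine verification, and the only nontrivial input is Theorem~\ref{Aisrc}, whose proof already does the real work of exhibiting the common enumerator $q$.
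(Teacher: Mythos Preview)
Your proposal is correct and follows essentially the same route as the paper: both arguments invoke Theorem~\ref{Aisrc} to obtain the common abstract enumerator, and both build the transformation $\repy\circ\repx^{-1}$ by running the two concrete enumerators $\frep{q}{\repx}$ and $\frep{q}{\repy}$ in lockstep and matching indices. The only difference is cosmetic: the paper then repeats the construction symmetrically to exhibit $\repx\circ\repy^{-1}$ directly, whereas you instead appeal to the earlier theorem (that $\repy\wleq\repx$ with $\repx(A)$ c.e.\ forces $\repy\seqv\repx$) to get the reverse transformability for free; this is a harmless shortcut and not a genuinely different decomposition.
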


\begin{proof}
  From theorem \ref{Aisrc} we know that $\repy(A)$ is also c.e. and
  that there is an enumerator $e$ of $A$ that is computable in both
  representations, that is, $e \in\ocomp{\repx} \cap \ocomp{\repy}$.

  Let us begin by showing that $\repy\wleq\repx$; in order to do this,
  we must find a computable function $f$ such that
  $\repy=f\circ\repx$. Let $a\in{A}$, and $\repx(a)$ its
  $\repx$-representation. If $a_0$ is the start value of $e$,
  initialize two tapes with $\repx(a_0)$ and $\repy(a_0)$, then
  simulate the two TM that compute $\frep{e}{\repx}$ and
  $\frep{e}{\repy}$ until we reach an iteration $i$ such that
  $\frep{e}{\repx}^i(\repx(a_0))=\repx(a)$, then $e^i(a_0)=a$, and
  $\frep{e}{\repy}^i(\repy(a_0))=\repy(a)$, that is, on the second tape
  we have the $\repy$-representation of $a$. We therefore have a TM
  that computes $f$.

  In a similar way we can build a TM that computes $g$ such that
  $\repx=g\circ\repy$.
\end{proof}

The following lemma is a direct consequence of the definition of
equivalence by transformability

\begin{lemma}
  Let $\repx$ and $\repy$ be two representations of $A$; if
  $\repx\seqv\repy$ then $\repx\weqv\repy$.
\end{lemma}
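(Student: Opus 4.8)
The claim is that transformational equivalence implies computational equivalence. Unwinding the definitions, I need to show that if $\repx\seqv\repy$ then $\ocomp{\repx}=\ocomp{\repy}$, i.e.\ every function on $A$ computable in representation $\repx$ is also computable in representation $\repy$ and vice versa. By symmetry of $\seqv$, it suffices to prove one inclusion, say $\ocomp{\repx}\subseteq\ocomp{\repy}$, assuming only $\repx\wleq\repy$ and $\repy\wleq\repx$.

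The plan is to exhibit, for each $f\in\ocomp{\repx}$, an explicit Turing machine computing $\frep{f}{\repy}=\repy\circ f\circ\repy^{-1}$ on $\repy(A)$. From $\repx\seqv\repy$ we get two computable transformations: $t\in[\repy(A)\carrow\repx(A)]$ with $\repx=t\circ\repy$ (this is the witness of $\repx\wleq\repy$) and $s\in[\repx(A)\carrow\repy(A)]$ with $\repy=s\circ\repx$ (the witness of $\repy\wleq\repx$); note that on $\repx(A)$ and $\repy(A)$ respectively, $s$ and $t$ are mutually inverse since the representations are bijections onto their images. Since $f\in\ocomp{\repx}$, the representation $\frep{f}{\repx}=\repx\circ f\circ\repx^{-1}\in[\repx(A)\carrow\repx(A)]$ is computable. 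Now compute: $s\circ\frep{f}{\repx}\circ t = s\circ\repx\circ f\circ\repx^{-1}\circ t = \repy\circ f\circ(\,t^{-1}\circ\repx^{-1})\dots$ — more directly, on $\repy(A)$ we have $\repx^{-1}\circ t = \repy^{-1}$ (from $\repx=t\circ\repy$), so $s\circ\frep{f}{\repx}\circ t = s\circ\repx\circ f\circ\repy^{-1}$, and since $s\circ\repx=\repy$ on the relevant domain, this equals $\repy\circ f\circ\repy^{-1}=\frep{f}{\repy}$. Hence $\frep{f}{\repy}$ is a composition of three computable functions, and by Theorem~\ref{compotheorem} (composability, restricted appropriately to the domains $\repy(A)$, $\repx(A)$) it is computable, so $f\in\ocomp{\repy}$.

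The main thing to be careful about — rather than a deep obstacle — is bookkeeping of domains and ranges in the composition, so that the hypotheses of Theorem~\ref{compotheorem} are met: $t$ maps $\repy(A)$ onto $\repx(A)$, $\frep{f}{\repx}$ is a (possibly partial) computable function on $\repx(A)$, and $s$ maps $\repx(A)$ onto $\repy(A)$; the intermediate range-domain intersections are exactly the images we want because $s$ and $t$ are bijections, so no spurious restriction is introduced and the composed computable function genuinely realizes $\frep{f}{\repy}$ on its full domain $\repy(A)$ (or the domain induced by $f$). Finally, swapping the roles of $\repx$ and $\repy$ gives the reverse inclusion $\ocomp{\repy}\subseteq\ocomp{\repx}$, so $\ocomp{\repx}=\ocomp{\repy}$, i.e.\ $\repx\weqv\repy$, completing the proof.
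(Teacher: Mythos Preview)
Your argument is correct and is exactly the ``direct consequence of the definition of equivalence by transformability'' that the paper has in mind; the paper does not spell out a proof, and your conjugation $\frep{f}{\repy}=s\circ\frep{f}{\repx}\circ t$ together with Theorem~\ref{compotheorem} is the intended one-line justification.
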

 
Note, however, that equivalence is necessary here. It is generally not
true that if $\repx\wleq\repy$ then $\repx\sleq\repy$ nor $\repy \sleq
\repx$. The intuitive idea is that transformability gives us a one way
translation capability, but in order to compute we need to translate
both ways, to represent and to interpret.

From these results, we easily derive the following theorem:

\begin{theorem}
  Let $\repx$ and $\repy$ be two representations of $A$; if $\repx(A)$
  and $\repy(A)$ are c.e.  then either $\repx\seqv\repy$ or
  $\repx\spar\repy$.
\end{theorem}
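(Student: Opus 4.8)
The plan is to obtain this statement as an immediate corollary of the preceding theorem, using only the symmetry of $\seqv$ and the definition of incomparability. Recall that $\repx\spar\repy$ means precisely that neither $\repx\sleq\repy$ nor $\repy\sleq\repx$ holds; hence the dichotomy ``$\repx\seqv\repy$ or $\repx\spar\repy$'' is equivalent to the single implication: if $\repx$ and $\repy$ are comparable under $\sleq$, then they are transformationally equivalent.

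So first I would dispose of the case in which $\repx$ and $\repy$ are not comparable under $\sleq$: then by definition $\repx\spar\repy$, which is the second alternative, and there is nothing more to prove. Otherwise $\repx$ and $\repy$ are comparable, i.e.\ $\repx\sleq\repy$ or $\repy\sleq\repx$. Since the hypotheses are symmetric in $\repx$ and $\repy$ (both $\repx(A)$ and $\repy(A)$ are assumed c.e.), it suffices to treat one of these, say $\repy\sleq\repx$. Then I would apply the preceding theorem with the roles of $\repx$ and $\repy$ interchanged: from $\repy\sleq\repx$ and the fact that $\repy(A)$ is c.e., that theorem yields $\repy\seqv\repx$, and hence $\repx\seqv\repy$ because $\seqv$ is an equivalence relation, in particular symmetric. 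This is the first alternative. As the two cases are exhaustive, the proof is complete.

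Since essentially all of the work is already packaged in the preceding theorem, I do not expect any real obstacle here; the only point requiring care is the bookkeeping of which of $\repx(A)$, $\repy(A)$ must be c.e.\ for each invocation of that theorem — but as both are c.e.\ by hypothesis, whichever direction of comparability happens to hold can be fed into it. Equivalently, one may phrase the argument by contraposition: if $\repx\not\spar\repy$, some comparability under $\sleq$ holds, and the preceding theorem upgrades it to $\seqv$-equivalence.
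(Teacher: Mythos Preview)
Your argument is correct and is exactly what the paper intends: it presents this theorem without proof, merely stating that it is easily derived from the preceding results, and your case split on comparability under $\sleq$ followed by an invocation of the preceding theorem (using whichever of $\repx(A)$, $\repy(A)$ is needed) is precisely that derivation.
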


The proof of all these properties boils down in practice to the
possibility of computing an enumerator of $A$. We can therefore
formulate the previous result in the following guise:

\begin{theorem}
  \label{finalres}
  Let $A$ be a set with an enumerator $e$, and let $\repx$ and
  $\repy$ be two representations of $A$; then if
  $e\in\ocomp{\repx}$ and
  $e\in\ocomp{\repy}$, then
  $\repx\seqv\repy$.
\end{theorem}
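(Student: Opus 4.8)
The text preceding this theorem already observes that the proofs of the c.e.\ results boil down to the computability of an enumerator of $A$, and the plan is to make that remark precise: the only role the two c.e.\ hypotheses played there was to supply a single enumerator $e$ of $A$ computable in both $\repx$ and $\repy$, which is now assumed directly. So it suffices to prove $\repy\wleq\repx$; running the same construction with the roles of $\repx$ and $\repy$ exchanged then gives $\repx\wleq\repy$, and the two relations together are by definition $\repx\seqv\repy$.

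For $\repy\wleq\repx$ I must exhibit a computable $f\in[\repx(A)\carrow\repy(A)]$ with $\repy=f\circ\repx$. Let $a_0\in A$ be the start element of the enumerator, so that every $a\in A$ equals $e^{i}(a_0)$ for some $i\in\mathbb{N}$, and let $M_\repx$ and $M_\repy$ be Turing machines computing $\frep{e}{\repx}$ and $\frep{e}{\repy}$, which exist since $e\in\ocomp{\repx}\cap\ocomp{\repy}$. Using Lemma \ref{latest} and the bijection operations, build a TM that on input $\tape$ first forms the pair $\bij{\repx(a_0),\repy(a_0)}$ and then repeats, on the current pair $\bij{u,v}$, the following step: test $u=\tape$ with \emph{eq}; if they are equal, output $v$ and halt; otherwise replace $\bij{u,v}$ by $\bij{\frep{e}{\repx}(u),\frep{e}{\repy}(v)}$, applying $M_\repx$ via $\alpha_1$ and $M_\repy$ via $\alpha_2$, and loop. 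By induction, and because $e\in[A\rightarrow A]$ keeps the orbit of $a_0$ inside $A$, after $k$ iterations the pair is $\bij{\repx(e^{k}(a_0)),\repy(e^{k}(a_0))}$. Hence if $\tape=\repx(a)$ with $a\in A$ and $i$ satisfies $e^{i}(a_0)=a$, the loop halts at some iteration $j\le i$ with $\repx(e^{j}(a_0))=\tape=\repx(a)$, so $e^{j}(a_0)=a$ by injectivity of $\repx$ and the output is $\repy(e^{j}(a_0))=\repy(a)$; while if $\tape\notin\repx(A)$ the test never succeeds and the machine diverges. Thus the computed function has domain exactly $\repx(A)$, range $\{\repy(a):a\in A\}=\repy(A)$, and satisfies $\repy=f\circ\repx$, so $f\in[\repx(A)\carrow\repy(A)]$ witnesses $\repy\wleq\repx$.

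The only point needing care is the bookkeeping with the tape bijection inside the loop: keeping the original input $\tape$ available alongside the working pair $\bij{u,v}$, and presenting $u$ and $\tape$ to \emph{eq} in the paired form it expects. This is exactly the kind of routine combination of the operations assumed computable in the discussion of $\bij{\_,\_}$, so I will treat it as granted rather than carry it out; everything else is a straightforward synchronized simulation of the enumerator in the two representations at once.
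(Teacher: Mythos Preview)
Your argument is correct and follows essentially the same route as the paper: the paper presents Theorem~\ref{finalres} as a reformulation of the preceding result, whose proof is exactly the parallel-simulation construction you describe---initialize with $\repx(a_0)$ and $\repy(a_0)$, iterate $\frep{e}{\repx}$ and $\frep{e}{\repy}$ in lockstep until the $\repx$-side matches the input, then output the $\repy$-side, and swap roles for the other direction. Your version is slightly more explicit about the use of the bijection operators and the termination/injectivity reasoning, but the underlying idea is identical.
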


The proven results effectively prevent the creation of a hierarchy of
degrees based on representations. Consider the two representations
$\repu$ and $\repu'$: they are representations of ${\mathbb{N}}$,
which is c.e. under $\repu$ so it is not $\repu \sleq \repu'$, and it is not $\repu' \sleq \repu$, since $\chi_{\repc}'$ is
computable in $\repu'$ but not in $\repu$. Therefore, $\repu\spar\repu'$--that is, in order to
gain the possibility of computing $\chi_{\repc}'$, we must give up the
computability of some of the functions that are computable in
$\repu$. Specifically, we lose
the possibility of computing the iterator itself: given
$\repu'(n)=\bij{\repc(\chi_{\repc}'(n)),\repc(n)}$, we can't compute
$\repu'(n+1)=\bij{\repc(\chi_{\repc}'(n+1)),\repc(n+1)}$.

On the other hand, the theorem doesn't tell us anything about the
other degrees of the hierarchy, since none of the range
$\repu^{(k)}({\mathbb{N}})$, $k\ge{1}$ are c.e. and none of them
allows the computation of the iterator. This leads to the idea of
\dqt{relativizing} the properties seen so far through the use of TMs
with oracles. Let $\acomp{R}{\repx}$ be the set of functions that can be
computed in a representation $\repx$ using a TM with oracle $R$. A set
$Q\subseteq\tpset$ is computationally enumerable in $R$ ($R$-c.e.) if
its enumerator can be implemented by a TM with oracle $R$.

\begin{definition}
  A representation $\repy$ is $R$-better than $\repx$
  ($\repx\aleq{R}\repy$) if
  $\acomp{R}{\repx}\subseteq\acomp{R}{\repy}$; the equivalences
  $\repx\aeqv{R}\repy$, $\repx\aseqv{R}\repy$ and the incomparability $\repx\apar{R}\repy$
  are defined in the obvious way.
\end{definition}

The following theorem can be proved in the same way as the preceeding
theorems, simply by replacing all TMs with a TM with the suitable
oracle.

\begin{theorem}
  Let $\repx$ and $\repy$ be two representations of a set $A$ with
  $\repx\aleq{R}\repy$; if $\repx(A)$ is $R$-c.e. then:
  \begin{description}
    \item[i)] $\repy(A)$ is $R$-c.e.;
    \item[ii)] $\repx\aseqv{R}\repy$;
    \item[iii)] $\repx\aeqv{R}\repy$.
  \end{description}
\end{theorem}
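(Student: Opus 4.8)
The plan is to relativize, essentially verbatim, the three unrelativized statements on which this theorem is modeled: part~(i) is the oracle analogue of Theorem~\ref{Aisrc}, part~(ii) is the oracle analogue of the theorem immediately following it, and part~(iii) is the oracle analogue of the subsequent lemma (``$\repx\seqv\repy$ implies $\repx\weqv\repy$'') applied to~(ii). Two preliminary observations license the relativization throughout: (a)~every plain Turing machine is a particular TM with oracle $R$, so every computable function is $R$-computable — in particular the elemental operations (duplication, swap, projections, partial application, and the comparator machine \emph{eq}) are $R$-computable; and (b)~for a fixed representation the class $\acomp{R}{\cdot}$ is closed under composition, since one may run two oracle machines in sequence, routing all oracle queries to the same $R$. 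These are exactly the facts used in the unrelativized arguments, so no new idea is required; the rest is bookkeeping.

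For~(i), since $\repx(A)$ is $R$-c.e., fix a start tape $\tape_0\in\repx(A)$ and an enumerator $e'$ of $\repx(A)$ implemented by a TM with oracle~$R$. Following the proof of Theorem~\ref{Aisrc}, set $q=\repx^{-1}\circ e'\circ\repx_{|A}\in[A\rightarrow A]$; then $q$ is an enumerator of $A$ with start value $a_0=\repx^{-1}(\tape_0)$, and $\frep{q}{\repx}=e'$ is $R$-computable, i.e. $q\in\acomp{R}{\repx}$. Since $\repx\aleq{R}\repy$ means $\acomp{R}{\repx}\subseteq\acomp{R}{\repy}$, we get $q\in\acomp{R}{\repy}$, i.e. $\frep{q}{\repy}=\repy\circ q\circ\repy^{-1}$ is implemented by a TM with oracle~$R$; and since $q$ enumerates $A$, $\frep{q}{\repy}$ enumerates $\repy(A)$. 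Hence $\repy(A)$ is $R$-c.e., with the same enumerator $q$ on $A$ — a fact reused below.

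For~(ii), keep the enumerator $q$ of~(i), so that both $\frep{q}{\repx}$ and $\frep{q}{\repy}$ are $R$-computable. To build an $R$-computable $f$ with $\repy=f\circ\repx$: on input $\repx(a)$, run in parallel — on two work tapes initialized to $\repx(a_0)$ and $\repy(a_0)$ — the two oracle-$R$ machines for $\frep{q}{\repx}$ and $\frep{q}{\repy}$, testing after each round with \emph{eq} whether the first tape has reached $\repx(a)$; at the iteration $i$ where it has, the second tape holds $\frep{q}{\repy}^i(\repy(a_0))=\repy(q^i(a_0))=\repy(a)$, which we output. Symmetrically build an $R$-computable $g$ with $\repx=g\circ\repy$, whence $\repx\aseqv{R}\repy$. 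For~(iii), relativize the lemma: with these mutually inverse transformations $f=\repy\circ\repx^{-1}$ and $g=\repx\circ\repy^{-1}$ in hand, every $h$ with $\frep{h}{\repx}$ $R$-computable satisfies $\frep{h}{\repy}=\repy\circ h\circ\repy^{-1}=f\circ\frep{h}{\repx}\circ g$, an $R$-computable composition by~(b), so $\acomp{R}{\repx}\subseteq\acomp{R}{\repy}$; the reverse inclusion is symmetric, giving $\repx\aeqv{R}\repy$.

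The only point needing genuine care — the ``main obstacle'', modest as it is — is uniformity of the oracle: a single fixed $R$ must serve in both representations at once, which is precisely what $\repx\aleq{R}\repy$ encodes, and the parallel simulation and equality tests of~(ii) must be carried out by one TM equipped with that one oracle. Once one observes that an oracle query is merely one further atomic step — unaffected by parallel simulation or by composition, and not interfering with the reserved work tapes carrying the $\repx$- and $\repy$-copies — the three proofs transfer line for line from their unrelativized counterparts.
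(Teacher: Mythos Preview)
Your proposal is correct and follows exactly the approach the paper indicates: the paper's entire proof is the single remark that the theorem ``can be proved in the same way as the preceding theorems, simply by replacing all TMs with a TM with the suitable oracle,'' and you have carried out precisely that relativization of Theorem~\ref{Aisrc}, the subsequent theorem, and the lemma, with the necessary bookkeeping made explicit. The additional care you take about uniformity of the oracle and closure of $\acomp{R}{\cdot}$ under composition is appropriate and in the spirit of what the paper leaves implicit.
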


From this we derive

\begin{theorem}
  \label{yuppee}
  Let $\repx$ and $\repy$ be two representations of a set $A$, then
  \begin{description}
  \item[i)] if $\repx(A)$ and $\repy(A)$ are $R$-c.e., then either
    $\repx\aseqv{R}\repy$ or $\repx\apar{R}\repy$;
  \item[ii)] if $A$ is R-c.e. with enumerator $e$, $\frep{e}{\repx}\in\acomp{R}{\repx}$ and 
    $\frep{e}{\repy}\in\acomp{R}{\repy}$, then
    $\repx\aseqv{R}\repy$.
  \end{description}
\end{theorem}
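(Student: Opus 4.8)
The plan is to read both clauses off the theorem just proved (the one with items i)--iii)), exactly as the corresponding unrelativized statements were read off their predecessors; the only extra work is keeping track of the oracle. Throughout I will use, without further comment, that every construction in the earlier proofs carries over when each Turing machine may consult the oracle $R$: an oracle-$R$ TM subsumes an ordinary TM, composition and lock-step parallel simulation of oracle-$R$ TMs are again oracle-$R$ TMs, and the assumed tape-bijection machinery ($\bij{\_,\_}$, the equality-testing machine $eq$, projections, partial application) needs no oracle at all. In particular the preceding theorem, the closure of the $R$-computable functions under composition, and the $R$-relativized form of Theorem \ref{inverseth} are all available.

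For item i) I would first note that the two alternatives to be proved are ``$\repx\aseqv{R}\repy$'' and ``$\repx\apar{R}\repy$'', so it suffices to show that whenever $\repx$ and $\repy$ are \emph{not} $R$-incomparable they are $R$-transformationally equivalent. If $\repx\aleq{R}\repy$, then since $\repx(A)$ is $R$-c.e.\ by hypothesis, item ii) of the preceding theorem gives $\repx\aseqv{R}\repy$. If instead $\repy\aleq{R}\repx$, apply that same item with the roles of $\repx$ and $\repy$ exchanged, legitimately since $\repy(A)$ is $R$-c.e.\ by hypothesis, to obtain $\repy\aseqv{R}\repx$, which is $\repx\aseqv{R}\repy$ by symmetry of $\aseqv{R}$. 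This settles the dichotomy.

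For item ii) I would first observe that the hypotheses already force $\repx(A)$ and $\repy(A)$ to be $R$-c.e.: taking $a_0$ to be the base point of the enumerator $e$ of Definition \ref{recenb}, every tape of $\repx(A)$ equals $\frep{e}{\repx}^{i}(\repx(a_0))$ for some $i$ (because every $a\in A$ is $e^{i}(a_0)$), and $\frep{e}{\repx}$ is $R$-computable by hypothesis; likewise for $\repy$. So item i) applies, and it only remains to exclude $\repx\apar{R}\repy$. For this I would relativize the lock-step enumeration construction underlying Theorem \ref{finalres}: an oracle-$R$ TM that, on input $\repx(a)$, keeps two tapes initialised to $\repx(a_0)$ and $\repy(a_0)$, iterates the oracle-$R$ machines for $\frep{e}{\repx}$ and $\frep{e}{\repy}$ on them in lock-step, and after each round tests with $eq$ whether the first tape now equals $\repx(a)$; the search halts at the least $i$ with $e^{i}(a_0)=a$, at which point the second tape carries $\frep{e}{\repy}^{i}(\repy(a_0))=\repy(a)$, so outputting it exhibits a transformation witnessing (the $R$-relativized) $\repy\wleq\repx$. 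The symmetric machine, searching on the $\repy$-tape, witnesses $\repx\wleq\repy$ relative to $R$. Hence $\repx$ and $\repy$ are $R$-comparable, so $\repx\apar{R}\repy$ fails, and item i) yields $\repx\aseqv{R}\repy$.

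I do not expect a genuine obstacle here; the single point that needs attention is the bookkeeping flagged in the text for the preceding theorem, namely checking that every lemma and theorem invoked above (composition closure, the analogue of Theorem \ref{inverseth}, and items i)--iii) of the preceding theorem) relativizes verbatim once every machine is given access to $R$. Granting that, items i) and ii) follow as above.
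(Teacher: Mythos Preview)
Your proposal is correct and matches the paper's approach, which simply says the result is derived from the preceding relativized theorem by the same arguments as in the unrelativized case. One minor remark: in item ii) your lock-step construction already witnesses both directions of the $R$-relativized transformability, i.e.\ $\repx\aseqv{R}\repy$ outright, so the final detour through ``$R$-comparable'' and item i) is superfluous (and slightly muddles the $\wleq$/$\sleq$ distinction), though it does no harm.
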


Consider now the representations $\repu^{(k)}$ and $\repu^{(k+1)}$;
note that $\chi^{(k+1)}\in\acomp{\emptyset^{k}}{\repu^{(k+1)}}$, but
$\chi^{(k+1)}\not\in\acomp{\emptyset^{k}}{\repu^{(k)}}$, so
$\repu^{(k)}\not\aeqv{\emptyset^{k}}\repu^{(k+1)}$. On the other hand
the enumerator of $\emptyset^{(k)}$ is computable in $\repu^{(k)}$, therefore, it must be
$\repu^{(k)}\apar{\emptyset^{k}}\repu^{(k+1)}$.

\separate

\section{Church-Turing thesis}

The Church-Turing thesis is arguably one of the most famous statements
in computability theory.  The notion of representation in the Turing
machine formalism and its properties, suggest us a new point of view
on the Church-Turing thesis, one that we hope may lead us to a better
understanding of it by seeing it in a somewhat different light.  In a
nutshell, the Church-Turing thesis states that the notion of
computability in any formalism essentially corresponds to that defined
by Turing machines. The thesis' name is due to Turing's work on
providing evidence that this thesis was most likely true. More
precisely, Turing provided a proof in his classic paper
\cite{turing:36} that Turing machines and $\lambda$-calculus (defined
by Church \cite{church:36}) were equivalent.

There are two clarifications to be made vis-a-vis the
Church-Turing thesis and related ideas. On the one hand,
there is a reason for which Church-Turing \emph{thesis} is thus
named. Church-Turing thesis talks about the intuitive notion of
computability, and thus it is not a mathematical statement which can
be proven. One can only provide evidence that suggests that any
reasonable model of computability is equivalent to Turing machines. On
the other hand, the Church-Turing thesis, even if considered true (in its
informal sense), still does not provide an unambiguous definition of
whether a given mathematical function (alternatively, set) is
computable or it is not. The reason for this is that all existing
mathematical models for computation are defined using representative
but specific sets (strings or tapes for Turing machines, lambda
functions for $\lambda$-calculus, natural numbers for (primitive)
recursive functions...). Using these formalisms to consider whether a
function outside the formalism is computable or it is not requires a
process of representation of external elements into the formalism.

This makes our consideration on representations relevant vis-a-vis the
Church-Turing thesis.

\bigskip

\property
\label{AnythingComputable}
\begin{quote}
  For each possible mathematical set, there exists a choice of
  representation which makes that set computable.
\end{quote}

\bigskip

The representation choice is analogue to the one provided in example
\ref{exhalting}. This fact prevents us from giving an absolute
statement about the computability of functions outside of
formalisms. In the case of Turing machines, we can only give absolute
results for functions defined on tapes, in the case of
$\lambda$-calculus for functions on $\lambda$-expressions, and so
on. This makes it impossible to compare, in absolute terms, two
formalisms, as they operate on different sets.

The Turing equivalence theorem, on the other hand, is a formal
comparison of two formalisms, each taking as operating on a different
set.  In light of the apparent ambiguity that representations give to
the concept of computable function, the question then becomes: What
kind of equivalence exists between the Turing machine formalism and
the $\lambda$-calculus formalism? The equivalence provided by Turing's
theorem is \emph{structural}: it shows that the set of
Turing-computable functions (which are functions on strings or tapes)
and the set of computable functions in $\lambda$-calculus (which are
functions on lambda functions) are isomorphic.

In order to provide a full statement of the theorem, then, it would be
necessary to specify what kind of structure they are equivalent
under. This is not explicitly stated in the original Turing theorem,
since the equivalence is proven by showing the somehow obvious
equivalence of the basic functions and using operations such as
composition to build the two isomorphic sets of computable functions
in both formalisms%
\footnote{It could be argued that the reason these precisions were
  not considered by Turing is simply because he was not interested at
  the moment in doing these abstract considerations, but rather was
  more concerned with the computability of certain functions on
  natural numbers with respect to the basic arithmetic operations.}%
. Looking at the details of the proof, the structure for which the
equivalence is proven can be given explicitly. The basic operations
are considered equivalent because they are undistinguishable in
set-theoretic terms. This corresponds to the notions of injectivity
(cardinality) and composability, and derived ones. That is, there is a
bijection between strings or tapes and $\lambda$-functions such that
for each computable function and each subset, the cardinalities of the
images and preimages are preserved (if one string can be obtained from
two possible different strings applying a certain Turing machine, then
the corresponding lambda function must be obtainable from two
different $\lambda$-functions applying the corresponding lambda
function, etc); and at the same time, composability is preserved (if
Turing machines $f$ and $g$ are composed, the resulting Turing machine
corresponds to the $\lambda$-function resulting from composing those
corresponding to $f$ and $g$).

This is not different from any other mathematical structure. For
instance, sets are not group-isomorphic until they are given a group
structure, and so long as cardinalities (set-theoretic structure)
correspond, any two sets can be made group-isomorph by making the
right choice of group structure. Computability is usually only
considered for countable sets, so cardinality is not an issue. These
considerations allow us to restate Property \ref{AnythingComputable}
in the following guise.

\bigskip

\property
\label{AnythingComputableFormalism}
\begin{quote}
  For each possible mathematical set, there exists a choice of a
  computation formalism which is structurally equivalent to Turing
  machines such that the set is computable in such formalism.
\end{quote}

\bigskip

This is somewhat the heart of the issue of representations. A change
of representation and a change of formalism are two faces of the same
coin: computability, the same way as groups, rings, vector spaces or
any other algebraic structure, only can be endowed with after we
have specified what structure we are working in. Thus, the usual
notion of computability of problems of natural numbers corresponds to
a computability structure specified on natural numbers and which
relates to its ring structure in a particular way, namely, in that it
enables the computation of the successor function. Other equivalent
but incompatible structures could be considered on the same set,
providing an equivalent to considering different representations of
natural numbers.

\section{Some Concluding Remarks}
A Turing machine is an abstract device that implements syntactic
functions, that is, functions that transform finite sequences of
symbols on a tape into sequences of symbols on the same tape. We have
shown that using this formalism to define the computability of
functions specified on arbitrary sets requires a process of
representation that cannot, \emph{a priori}, be ignored.  The same
happens with any other model of computability that is defined on a
representative set, such as $\lambda$-functions for
$\lambda$-calculus, recursive functions or natural numbers for
register machines.

The way these formalisms are defined is historically motivated by the
resemblances they exhibit with the mental and physical processes
carried by a mathematician when working on a problem using pen and
paper. This metaphor makes the computability of certain functions and
operators evidently desirable: composition and partial application of
computable functions, computability of the identity function and of a
comparator function. While the computability of an enumerator is
intuitively assumed, it is not derivable from the other properties in
a general computability model, and it is generally not included as an
additional axiom for the model. We have shown that there are
theoretical reasons that compel the inclusion of the computability of
an enumerator as an additional axiomatic capability of common computability
models or, alternatively, of the representations carried when using
these models. Theorem \ref{finalres} of this paper proves that any
computability model with the usually assumed properties can never be
more powerful than another one with the same properties plus
enumerability.

This result, however, gives rise to new questions. The central one is
related to the fact that there are uncountably many enumerators of a
countable set, and that they are obviously not all equivalent. When we
are not thinking of a specific purpose for our computability model,
there is no principled reason to favour the use of one of these
enumerators over others. This creates a whole (uncountable) set of
possible computability models over the same set, all of them mutually
incomparable. As an example, consider two representations of the
cartesian product of Turing machines and input tapes. The first one is
the usual representation used for the classical universal Turing
machine, and it allows the computation of the typical enumerator of
Turing machines and tapes, and thus is a computationally enumerable
representation. The second one allows the computation of another
enumerator which enumerates Turing machines and tapes in a way such
that odd numbers correspond to halting Turing machines and tapes,
while even numbers correspond to non-halting Turing machines and
tapes. Inside each of these subsets, the machines are enumerated using
the enumeration provided in the first representation. As we showed,
the implementation of this enumeration using a representation is
trivial, as it is on the mapping function, outside all formalism,
where we can include all the "magic" necessary. So much that in
practice, the Turing machine implementing both enumerators can
actually be the same one (the general enumerator of tapes). It
  is obvious that these two enumerators are incomparable, since the
  latter allows the computation of the halting problem while the
  former does not, while both being computationally enumerable, and
  the proven theorem can then be applied to prove the
  incomparability.

The kind of problems that we want to solve, or more generally, the
kind of mathematical structure for which we want to create a
computability model (e.g., natural numbers with successor function)
will indicate which enumerator we should use. We can see
  this as thinking that instead of considering the absolute
  computability of arbitrary abstract functions, we can consider their
  co-computability (or co-decidability respectively for set
  membership). For example,
no computability model for natural numbers that allows the computation
of addition also allows the computation of the halting problem (as
typically formulated for natural numbers): these problems are
incompatible. The general Turing jump definition included in the text
generalizes the halting theorem's traditional statement that there are
absolutely incomputable problems over representations to provide a new
statement that in any representation, there are some incomputable
problems.

We have also shown that the structure of the family of models
generated by changes in representations is inherently different from
that provided by oracle machines. While the introcuction of oracles
strictly increases the computation power of a model, a change of
representation is not guaranteed to increase or even retain its
computation power. Moreover, while oracle machines cannot be produced
simply by cleverly exploiting the typical definition of Turing
machines, instead requiring an extension on the definitions,
representations can. We have proven that the "degrees of
representation" originated from using representations are related but
not equal to the classical hierarchy of degrees of recursive
unsolvability provided by oracle machines.

In the section on the Church-Turing thesis we suggest that
computability could be defined as a mathematical structure rather than
as a representative model defined on a representative set, thus
transforming the problem of representation into a problem of
specifying the appropriate structure in the represented set. We
believe that this approach is more in line with the usual work in
mathematics, while at the same time avoiding some ambiguities and
apparent uncertainties present when using models such as Turing
machines or $\lambda$-calculus. We believe that trying to define this
algebraic structure in the same terms as groups, vector fields or
topological spaces is a promising direction for future work.

Finally, a very promising line of future work is the study of the
relation between representation and computational complexity.
For
example, while multiplication of integers represented in unary is an
$O(n^2)$ problem and an $O({log}^2(n))$ problem in binary, if we
represent integers by their decomposition in prime factors, it becomes
an O(1) problem, however making addition a potentially exponential
problem. 
We believe that a generic approach to the problem of the effect of
representation on complexity under an abstract formalism similar to
the one proposed in this text could be useful and possibly provide new
directions for work on usual problems in complexity.


\end{document}